\documentclass{article}

\usepackage[preprint]{neurips_2026}
\usepackage{comment}

\usepackage[utf8]{inputenc} 
\usepackage[T1]{fontenc}    
\usepackage{hyperref}       
\usepackage{url}            
\usepackage{booktabs}       
\usepackage{amsfonts}       
\usepackage{nicefrac}       
\usepackage{microtype}      
\usepackage{xcolor}         

\usepackage{amsmath, mathtools, amssymb}

\newcommand{\KL}{D_{\textnormal{KL}}}

\newcommand{\diag}{\textnormal{diag}}

\newcommand{\argmin}{\textnormal{argmin}}
\DeclarePairedDelimiterX{\inp}[2]{\langle}{\rangle}{#1, #2}
\DeclarePairedDelimiterX{\norm}[1]{\lVert}{\rVert}{#1}

\newcommand{\bfB}{\mathbf{B}}
\newcommand{\bfb}{\mathbf{b}}
\newcommand{\bfd}{\mathbf{d}}
\newcommand{\bfp}{\mathbf{p}}
\newcommand{\bfq}{\mathbf{q}}
\newcommand{\bfs}{\mathbf{s}}

\newcommand{\bfx}{\mathbf{x}}
\newcommand{\bfy}{\mathbf{y}}

\newcommand{\RR}{\mathbb{R}}

\usepackage{amsthm}
\newtheorem{theorem}{Theorem}
\newtheorem{corollary}{Corollary}
\newtheorem{proposition}{Proposition}
\newtheorem{lemma}{Lemma}
\theoremstyle{definition}
\newtheorem{definition}{Definition}
\newtheorem{assumption}{Assumption}
\newtheorem{remark}{Remark}

\newenvironment{repeatproposition}[1]
  {\renewcommand{\theproposition}{#1}\begin{proposition}}
  {\end{proposition}\addtocounter{proposition}{-1}}
\newenvironment{repeatlemma}[1]
  {\renewcommand{\thelemma}{#1}\begin{lemma}}
  {\end{lemma}\addtocounter{lemma}{-1}}
\newenvironment{repeattheorem}[1]
  {\renewcommand{\thetheorem}{#1}\begin{theorem}}
  {\end{theorem}\addtocounter{theorem}{-1}}

\usepackage{hyperref}
\usepackage[capitalize]{cleveref}
\crefname{assumption}{Assumption}{Assumptions}

\usepackage{xcolor}

\usepackage{natbib}

\title{Fully Distributed T\^atonnement for Chores Markets}

\author{%
  Bhaskar Ray Chaudhury \\
  University of Illinois Urbana-Champaign \\
  \And
  Christian Kroer \\
  Columbia University \\
  \And
  Ruta Mehta \\
  University of Illinois Urbana-Champaign \\
  \And
  Tianlong Nan \\
  Columbia University \\
}

\begin{document}

\maketitle

\begin{abstract}
We study price-adjustment dynamics for computing competitive equilibria (CE) in Fisher markets with chores. Unlike in classical goods markets, prices in chores markets are payments for taking on undesirable tasks, and natural excess-demand dynamics can fail; even the naïve analogue of Walrasian tâtonnement may diverge. Recent work of~\cite{chaudhury2025t} overcomes this obstacle via \emph{relative tâtonnement}, which subtracts the average excess-demand signal from the excess demand vector. This recovers convergence, but at the cost of coupling the price updates across all chores. This leaves open whether such global coupling is inherent, or whether convergent tâtonnement can be recovered through a genuinely local update in which each chore reacts only to its own excess demand.
    
We answer this question affirmatively through \emph{multiplicative tâtonnement}, a fully distributed dynamics in which each chore price is updated using only its current price and its own excess-demand signal. Although the update contains no explicit normalization term, Walras' law and the multiplicative form of the update implicitly preserve the relevant aggregate price geometry. We prove that multiplicative tâtonnement converges to a CE in any chores Fisher market with continuous, convex, and $1$-homogeneous (CCH) disutilities.
For convex CES disutilities, we further prove an approximate-CE convergence rate with the same $O(1/\varepsilon^2)$ dependence as relative tâtonnement, but with improved dependence on problem constants.
Experiments on real-world and simulated instances show that multiplicative tâtonnement is substantially faster in practice, often by an order of magnitude.

\end{abstract}

\section{Introduction}

Competitive equilibrium (CE) is one of the central solution concepts in market design and microeconomic theory. The \emph{Fisher market} is one of the most fundamental market settings.
In a classical Fisher market with \emph{goods}, there is a set of $m$ goods and $n$ agents. Each agent is endowed with a fixed budget of money, and once prices are assigned to the goods, agents spend their budgets to obtain utility-maximizing bundles. Prices are at equilibrium when these individual demands exactly clear the market. Beyond the static existence and welfare properties of CE, a classical question asks whether natural market dynamics can find such prices. Walrasian tâtonnement is the
canonical example: prices are adjusted in response to excess demand, increasing when demand exceeds supply and decreasing when supply exceeds demand~\citep{walras1874elements}. The stability and convergence of such dynamics have been studied since the foundational work of~\cite{samuelson1941stability,arrow1958stability,arrow1959stability}, and remain active topics in algorithmic game theory~\cite{codenotti2005market,cole2008fast,cheung2019tatonnement,goktas2023tatonnement,nan2025convergence}.

This paper studies tâtonnement dynamics in Fisher markets with \emph{chores}. In a chores market, agents do not pay to receive goods; rather, they are paid to perform undesirable tasks. Thus prices represent payments per unit of work, and the natural response to imbalance is reversed: when a chore is under-demanded, its price should increase to attract agents, while an over-demanded chore should become less lucrative. This reversal leads to a substantially different dynamical landscape. Chores markets can have disconnected sets of equilibria, and excess demand no longer has the monotonic structure that supports many goods-market analyses. 
Correspondingly, the computational literature on chores and mixed manna deals with non-convex optimization issues and is more nuanced than in the goods case~\cite{bogomolnaia2017competitive,garg2020computing,BoodaghiansCM22,ChaudhuryGMM22,BranzeiS24,chaudhury2024competitive}.

\textbf{Relative Tâtonnement.}
The recent work of~\cite{chaudhury2025t} initiated the study of tâtonnement dynamics for chores markets. They show that the naïve additive analogue of tâtonnement can diverge, even in simple chores markets, and propose \emph{relative tâtonnement}, which restores convergence by subtracting the average excess-demand signal from each coordinate. Relative tâtonnement shows that it is indeed possible to recover convergence for a t\^atonnement-like procedure, but at the cost of a global
normalization to excess demand. \textcolor{black}{Crucially, this normalization requires each price to be updated with knowledge of the 
excess-demand signals of all other chores, making relative t\^atonnement 
not fully distributed: it is a centrally coordinated adjustment process rather 
than a truly decentralized one.}

\textbf{The Appeal of Decoupled Price-Adjustment Dynamics.} 
\textcolor{black}{Economically, the tatonnement process simulates how prices adjust in \emph{real markets}: each good has a manager, auctioneer, or a market-maker who observes only the imbalance between supply and demand for that good, and adjusts its price accordingly. This decentralized vision is precisely what makes tatonnement a compelling model of market dynamics — it does not require a central planner who aggregates information across all goods, nor does it require any individual agent to know the global state of the market. Each price responds only to its own local signal, and equilibrium emerges as a \emph{collective consequence} of many independent local adjustments. This is, in Hayek's terms~\cite{hayek1945use}, the informational miracle of the price system: global co-ordination is achieved through purely local price updates.}

Our paper asks whether there exist truly decoupled t\^atonnement dynamics for the chores setting?
We answer this question in the affirmative by giving a multiplicative update in which each chore reacts only to its own excess demand signal.
Our dynamics, which we call \emph{multiplicative tâtonnement}, updates each chore price according to
\[
    p_j^{k+1} = p_j^k(1 + \eta^k y_j^k),
\]
where $y_j^k$ is the excess supply of chore $j$ at the current prices $p_j^k$s. The update is fully local: the manager of chore $j$ only needs to know the current payment $p_j^k$ and the imbalance signal $y_j^k$ for that chore. No average excess-demand term, price-simplex projection, or information about other chores is used. The key observation is that the correction imposed explicitly by relative tâtonnement emerges implicitly from the multiplicative update. When prices are initialized in the budget simplex, Walras' law makes the multiplicative dynamics preserve the relevant aggregate price geometry while still allowing each coordinate to update independently.

\subsection{Our Contributions}
We make four main contributions. 
\begin{enumerate}
    \item First, we introduce multiplicative tâtonnement as a fully distributed market dynamics for chores Fisher markets.
    
    \item Second, we prove that this dynamics converges to a CE for general continuous, convex, and $1$-homogeneous (CCH) disutilities (Theorem~\ref{thm:asymptotic-convergence}). The proof relates the update to entropic mirror descent, while controlling the perturbation created by using the simple multiplicative rule rather than the exact exponential update that is induced by entropic mirror descent.
    
    \item Third, for convex CES disutilities, we prove convergence rates to approximate CE using a relative-smoothness argument on the price simplex (Theorem~\ref{thm:convergence-rate}). This yields the same $O(1/\varepsilon^2)$ dependence on the approximation parameter $\epsilon$ as relative tâtonnement, with improved dependence on problem constants, while avoiding any coupling.

    \item Finally, we compare the dynamics empirically on real-world and simulated instances, and find that multiplicative tâtonnement substantially outperforms relative tâtonnement in practice (Section~\ref{sec:numerical-experiments}).
\end{enumerate}

    



\subsection{Related work}

\noindent\textbf{Economic dynamics in goods markets.}
The study of economic dynamics in goods markets has been prevalent in the theoretical computer science community since the 2000s, often with a focus on algorithmic complexity. 
The first polynomial-time algorithm for linear Fisher markets, introduced by~\cite{devanur2008market}, uses a primal-dual approach that can be interpreted as an economic dynamics. 
In the more general setting of exchange economies satisfying weak gross substitutability (WGS), \cite{codenotti2005market} showed that discrete-time additive t\^atonnement dynamics converges to an approximate equilibrium. 
\cite{cole2008fast,cheung2012tatonnement} showed that multiplicative t\^atonnement dynamics with artificial upper bounds on excess demand converges in certain nonlinear \emph{constant elasticity of substitution} (CES) markets. 
Along this line of research, \cite{cheung2019tatonnement,cole2019balancing,goktas2023tatonnement,nan2025convergence} further improved convergence rates for a wide range of markets.

\noindent\textbf{Economic dynamics and optimization.}
T\^atonnement dynamics is often connected to first-order methods through the well-known Eisenberg--Gale convex program~\cite{eisenberg1959consensus,goktas2021consumer,cheung2013tatonnement}. 
In~\cite{nan2023fast}, the authors show that even classic proximal gradient descent can have an economic-dynamics interpretation. 
There is also recent work on price-adjustment dynamics related to other optimization methods~\cite{zhang2025second,chen2025accelerated}. 
Proportional response (PR) dynamics is another popular class of economic dynamics, known to converge to a CE for CES utilities in goods markets~\citep{wu2007proportional,zhang2011proportional}. 
\cite{birnbaum2011distributed} shows that PR dynamics is equivalent to mirror descent on the convex Shmyrev program~\cite{shmyrev2009algorithm}. 
Better convergence rates and generalized versions of PR dynamics have also been established~\cite{cheung2018dynamics,cheung2025proportional}. 
For a detailed survey of proportional response dynamics, we refer the reader to~\cite{branzei2021proportional}. 
Recent work~\citep{tao2025fisher} also studies market dynamics, including t\^atonnement and proportional response dynamics, for public goods.

\noindent\textbf{CE in chores markets.}
In contrast to goods markets, computing a CE in chores markets is known to be harder~\cite{ChaudhuryGMM22}. 
This difficulty stems from the ``poles'' in the classic Eisenberg-Gale (EG) program for chores markets, and economic dynamics inherits this difficulty~\cite{bogomolnaia2017competitive,BoodaghiansCM22,chaudhury2025t}. 
\cite{chaudhury2024competitive} opened up a direction for developing useful optimization algorithms by introducing a dual EG program without the ``pole'' issue. 
Later, a general version of this dual EG program was discovered for continuous, convex, $1$-homogeneous chores markets~\cite{chaudhury2025t,tao2025fisher}.
\cite{chaudhury2025t} prove a rate of $\tilde{\mathcal{O}}(1/\varepsilon^2)$ for computing an approximate CE in a wide range of CES chores markets.

\section{Preliminaries}
\label{sec:pre}

\paragraph{Chores Fisher Markets.}
A chores Fisher market consists of a set of $n$ agents and a set of $m$ divisible chores.
Agents are deciding on what chores to take on from in order to earn money.
Each agent $i$ incurs a disutility $d_i (\bfx_i)$ for a bundle $\bfx_i = (x_{i1}, x_{i2}, \dots, x_{im})$ of chores that is assigned to her, where 
$x_{ij}$ represents the amount of chore $j$ allocated to agent $i$.
Each agent $i$ has an earning requirement of $B_i > 0$.
We consider each $d_i: \mathbb{R}^m_+ \rightarrow \mathbb{R}_+$ to be a general \emph{continuous, convex, $1$-homogeneous} (CCH) disutility function, and $d_i(\bfx_i) > 0$ for any nonzero $\bfx_i$.
We call such markets as \emph{CCH chores Fisher markets}.

Each chore $j$ has a \emph{fixed} supply $s_j > 0$. 
An allocation is denoted by a matrix 
$\bfx \in \mathbb{R}^{n \times m}_{+}$ whose $i$th row is $\bfx_i$. 
Given prices $\bfp = (p_1, \dots, p_m)$ of chores, 
where $p_j$ represents the payment-per-unit of chore $j$ done, agent $i$ will demand a bundle of chores that minimizes her disutility subject to satisfying her earning requirements, i.e., the demand of agent $i$ is denoted by 
\begin{equation*}
    X_i(\bfp) := \textnormal{argmin}_{\bfx'_i \in \mathbb{R}^m_+} \left\{  d_i(\bfx'_i) \mid \inp*{\bfp}{\bfx'_i} \geq B_i \right.\}.
\end{equation*}
A competitive equilibrium of a chores Fisher market is defined as follows.
\begin{definition}[Competitive equilibrium]
    A pair of price $\bfp^* \in \RR^m_+$ and allocation $\bfx^* \in \RR^{n \times m}_+$ is a competitive equilibrium (CE) if 
    \begin{itemize}
        \item (optimal bundles) $\bfx^*_i \in X_i(\bfp^*)$ for each agent $i$, and  
        \item (market clearing) $\sum_{i = 1}^n x^*_{ij} \geq s_j$ and $p_j (\sum_{i = 1}^n x^*_{ij} - s_j) = 0$ for each chore $j$.
    \end{itemize}
\end{definition}
Without loss of generality, we make the following assumptions.
\begin{assumption}
    (1) $s_j = 1$ for any chore $j$.
    (2) $\sum_{i=1}^n B_i = 1$.
    (3) $d_i(\mathbf{1}_m) = 1$ for any agent $i$.
    \label{assump:wlog}
\end{assumption}
We show in~\cref{app:sec:chores-FM} that any CCH chores Fisher market  can be scaled to satisfy the above assumption, and that this scaling induces a bijection between the sets of CE of the two markets.
We further make the following natural assumption on disutility functions.
\begin{assumption}
    For each agent $i$, her disutility funciton $d_i(\cdot)$ is strictly increasing in its domain in each coordinate.
    \label{assump:strictly-increasing}
\end{assumption}
Under~\cref{assump:strictly-increasing}, no agent would do a zero-priced task, and therefore, it must be that $p^*_j>0$ for all $j\in [m]$ at equilibrium to ensure that every chore is done. As a result, the market clearing condition reduces to $\sum_{i = 1}^n x_{ij} = 1$ for all $j \in [m]$.
In this paper, we consider the following approximate CE~\cite{chaudhury2025t}.
\begin{definition}[$\varepsilon$-approximate CE]
A price vector $\bfp \in \mathbb{R}^m_+$ and an allocation $\bfx \in \mathbb{R}^{n \times m}_{+}$ satisfy $\varepsilon$-competitive equilibrium (CE) if and only if:
\begin{enumerate}
\item $\langle \bfp, \bfx_i \rangle 
= B_i$ for all $i \in [n]$;
\item $d_i(\bfx_i) \leq d_i(\bfy_i)$ for all $\bfy_i$ such that $\langle \bfp, \bfy_i \rangle \geq \langle \bfp, \bfx_i \rangle,\ \forall i \in [n]$;
\item $1 - \varepsilon \leq \sum_{i = 1}^n x_{ij} \leq \frac{1}{1 - \varepsilon}$ for all $j \in [m]$.
\end{enumerate}
\label{def:approx-CE-2}
\end{definition}

That is, only the market clearing condition is approximately satisfied, the rest of the equilibrium conditions are satisfied exactly.
A common class of CCH disutility functions is the class of convex CES disutilities: 
\begin{equation}
    d_i(\bfx_i) = \Big( \sum\nolimits_{j = 1}^m (d_{ij} x_{ij})^\rho \Big)^{\frac{1}{\rho}}, \quad d_{ij} > 0 \ \forall \ j \in [m], \, \rho \in (1, \infty).
    \tag{Convex CES Disutilities}
\end{equation}
Convex CES disutilities are CCH, strictly positive for any nonzero $\bfx_i$, and satisfy~\cref{assump:strictly-increasing}.

\paragraph{Excess Supply and T\^atonnement Dynamics.}
For a given price vector $\bfp \in \RR^m_+$, the excess supply mapping is defined as \emph{(excess-demand is negative of excess-supply)}
\begin{equation*}
    Y(\bfp) = \Big\{ \mathbf{1}_m - \sum\nolimits_{i = 1}^n \bfx_i \ \Big| \ \bfx_i \in X_i(\bfp) \Big\}.
\end{equation*}
We call any $\bfy \in Y(\bfp)$ an excess supply (vector).
If there exists a $\bfy \in Y(\bfp)$ such that $\bfy = \mathbf{0}_m$, then $\bfp$ corresponds to a CE.
If there exists a $\bfy \in Y(\bfp)$ such that $\norm{\bfy}_2 \leq \varepsilon$, then $\bfp$ corresponds to an $\varepsilon$-approximate CE defined in~\cref{def:approx-CE-2}; 
we add a proof these facts in~\cref{app:sec:chores-FM} for completeness.

T\^atonnement dynamics is a price-adjustment process driven by excess supply.
Intuitively, in a chores Fisher market, if the supply of a chore exceeds the demand for it, then the price of the chore should be increased to attract more agents to take on the chore. 
Conversely, if the supply is below the demand, then the price should be decreased so that agents become less willing to perform the chore.
\cite{chaudhury2025t} considered two types of t\^atonnement dynamics: 
\begin{equation}
    \begin{aligned}
        \bfy^k &\in Y(\bfp^k) \\ 
        \bfp^{k + 1} &= \bfp^k + \eta^k \bfy^k, 
    \end{aligned}
    \tag{Na\"ive T\^atonnement}
    \label{alg:naive-ttm}
\end{equation}
and 
\begin{equation}
    \begin{aligned}
        \bfy^k &\in Y(\bfp^k) \\ 
        \tilde{\bfy}^k &= \bfy^k - \big( \frac{1}{m} \mathbf{1}_m^\top \bfy^k \big) \cdot \mathbf{1}_m \\ 
        \bfp^{k + 1} &= \bfp^k + \eta^k \tilde{\bfy}^k.
    \end{aligned}
    \tag{Relative T\^atonnement}
    \label{alg:relative-ttm}
\end{equation}
They show that, \eqref{alg:naive-ttm} dynamics can diverge in chores Fisher markets with convex CES disutilities, whereas \eqref{alg:relative-ttm} dynamics converges in any CCH chores Fisher market satisfying~\cref{assump:strictly-increasing}.

\paragraph{Mirror descent.}
Let $h: \mathbb{R}^m \rightarrow \mathbb{R}$ be a function that is: a) strictly convex, b) continuously differentiable, c) defined on a closed convex set.
Then, the Bregman divergence is defined as 
\begin{equation*}
    D_h(\bfp, \bfq) = h(\bfp) - h(\bfq) - \langle \nabla h(\bfq), \bfp - \bfq \rangle.
\end{equation*}
If we takes $h_{\textnormal{KL}}(\bfp) = \sum_{j = 1}^m p_j \log{p_j}$, $\KL(\bfp, \bfq) = \sum_{j = 1}^m p_j \log{\frac{p_j}{q_j}} - p_j + q_j$, which is classic relative entropy, or KL divergence.
Mirror descent is a first-order optimization method that uses a Bregman divergence as its proximal geometry. 
Given a differentiable objective $f: \mathcal{P} \rightarrow \mathbb{R}$ and $\mathcal{P} \subset \mathbb{R}^m$, mirror descent updates 
\begin{equation*}
    \bfp^{k + 1} = \argmin_{\bfp \in \mathcal{P}} \left\{ \eta^k \inp{\nabla f(\bfp^k)}{\bfp} + D_h(\bfp, \bfp^k) \right\}.
\end{equation*}
Relative smoothness generalizes the usual Lipschitz-gradient smoothness condition. 
We say that $f$ is $L$-smooth relative to $h$ if
\begin{equation*}
    f(\bfq) \leq f(\bfp) + \inp{\nabla f(\bfp)}{\bfq - \bfp} + L D_h(\bfq, \bfp), \quad \textnormal{ for all } \bfp, \bfq \in \mathcal{P}.
\end{equation*}

\section{Multiplicative t\^atonnement and its convergence}
\label{sec:multiplicative-tatonnement-convergence}

\subsection{Multiplicative t\^atonnement dynamics}

We begin by introducing a new dynamics for chores Fisher market.
For each chore $j$, suppose there is a manager $j$ responsible for allocating that chore among the agents.
In each round $k$, manager $j$ submits a payment-per-unit $p^k_j$ to the market and receive an excess-supply signal $y^k_j$.
She then updates the payment-per-unit multiplicatively in response to this signal: if there is positive excess supply (more of the chore is supplied than agents want to perform), she increases the payment, and vice versa.
The update in round $k$ can be written compactly as follows, where $\odot$ denotes coordinate-wise product, and $\eta^k$ is the step size:  
\begin{equation}
    \begin{aligned}
        \bfy^k &\in Y(\bfp^k) \\ 
        \bfp^{k + 1} &= \bfp^k \odot (\mathbf{1}_m + \eta^k \bfy^k).
    \end{aligned}
    \tag{Multiplicative T\^atonnement}
    \label{alg:mul-ttm}
\end{equation}

This multiplicative dynamics has two key advantages:
\begin{itemize}
    \item It is \emph{fully distributed}: each manager observes only the feedback associated with her own chore, and not the feedback associated with other chores.
    \item It scales the adjustment by both the excess-supply signal and the current payment-per-unit.
\end{itemize}

\subsection{Asymptotic convergence of multiplicative t\^atonnement}

Next, we prove that, \eqref{alg:mul-ttm} dynamics converges to a CE in any CCH chores Fisher market.
This fact is in contrast to its additive analogue as Na\"ive T\^atonnement is known to provably diverge across a family of CCH chores Fisher markets~\citep{chaudhury2025t}.

One major obstacle to guaranteeing convergence of dynamics to a CE in chores markets is the divergent behavior of prices once they leave the ``price simplex'', as pointed out by~\cite{chaudhury2025t}.
In contrast to approaches that impose a correction on the excess-supply signal, multiplicative t\^atonnement prevents price divergence by exploiting Walras’ law, a fundamental identity in general equilibrium theory.
All proofs in this section are deferred to~\cref{app:sec:convergence}.
\begin{proposition}[Walras's law]
    If $\bfp \in \Delta_B$, then we have 
    $\inp{\bfp}{\bfy} = 0$ for all $\bfy \in Y(\bfp)$. 
    \label{prop:sum-p-zeta-equal-0}
\end{proposition}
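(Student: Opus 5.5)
The plan is to exploit the fact that every agent's optimal bundle exactly meets her earning requirement. First, I need to pin down $\Delta_B$. It has not been defined explicitly yet, but it is the budget simplex $\Delta_B = \{\bfp \in \RR^m_+ : \sum_{j=1}^m p_j = \sum_{i=1}^n B_i = 1\}$, using Assumption~\ref{assump:wlog}. Fix $\bfp \in \Delta_B$ and any $\bfy \in Y(\bfp)$, so that $\bfy = \mathbf{1}_m - \sum_i \bfx_i$ with $\bfx_i \in X_i(\bfp)$. Then
\[
\inp{\bfp}{\bfy} = \inp{\bfp}{\mathbf{1}_m} - \sum_{i=1}^n \inp{\bfp}{\bfx_i} = 1 - \sum_{i=1}^n \inp{\bfp}{\bfx_i}.
\]
It therefore suffices to show that $\inp{\bfp}{\bfx_i} = B_i$ for each optimal bundle; summing then gives $1 - \sum_i B_i = 0$.

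The key step is this tightness of the earning constraint at any $\bfx_i \in X_i(\bfp)$. Feasibility already gives $\inp{\bfp}{\bfx_i} \ge B_i$, and $B_i > 0$ forces $\bfx_i \neq \mathbf{0}$. Suppose instead that $\inp{\bfp}{\bfx_i} > B_i$. Set $t = B_i / \inp{\bfp}{\bfx_i} \in (0,1)$. The scaled bundle $t\bfx_i$ is still feasible, since $\inp{\bfp}{t\bfx_i} = B_i$. By $1$-homogeneity,
\[
d_i(t\bfx_i) = t\, d_i(\bfx_i) < d_i(\bfx_i),
\]
where the strict inequality uses $d_i(\bfx_i) > 0$ for nonzero $\bfx_i$. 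This contradicts the optimality of $\bfx_i$, so $\inp{\bfp}{\bfx_i} = B_i$.

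The only subtlety, which I would address briefly, is that $X_i(\bfp)$ must be nonempty for $Y(\bfp)$ to be meaningful. If some $p_j = 0$, demand may be ill-behaved. The statement, however, quantifies over $\bfy \in Y(\bfp)$, so it holds vacuously whenever $Y(\bfp)$ is empty. When $\bfp$ has all coordinates positive, the feasible set is closed and $d_i$ is continuous and coercive on it, since $d_i$ is $1$-homogeneous and positive on the compact set of nonnegative unit vectors. So a minimizer exists.

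With tightness in hand, the identity $\inp{\bfp}{\bfy} = 0$ follows immediately from the display above. I expect no real obstacle; the homogeneity scaling argument is the heart of the proof.
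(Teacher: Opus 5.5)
Your proof is correct and follows the paper's argument. Both write $\inp{\bfp}{\bfy}$ as $\sum_j p_j - \sum_i \inp{\bfp}{\bfx_i}$ and use that every optimal bundle earns exactly $B_i$; the paper asserts this ``by agent's optimality'', while you justify it with the $1$-homogeneity scaling argument, and working with $\inp{\bfp}{\bfx_i}$ directly rather than the paper's $y_j = 1 - \sum_i b_{ij}/p_j$ also avoids dividing by $p_j$, so price vectors with zero coordinates are covered without comment.
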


By leveraging this intrinsic property of excess supply, 
the price vectors generated by multiplicative tâtonnement automatically stay within the affine hull of $\Delta_B$, if the initial price vector lies in $\Delta_B$.
Beyond this fact, to ensure convergence of the dynamics, we additionally require boundedness of the excess supply and strict positivity of the prices.


\begin{lemma}[Boundedness of excess supply]
    In a CCH chores Fisher market that satisfies~\cref{assump:strictly-increasing}, denote $d_{\min}^{(i)} := \min_{j \in [m]} d_i(\mathbf{e}_j) > 0$ where $\mathbf{e}_j$ is the basis vector with a $1$ in coordinate $j$ and $0$ elsewhere. 
    Then, for any $\bfp \in \Delta_\bfB$, 
    $1 - (d_{\min})^{-1} \leq y_{j} \leq 1$ for any $j \in [m]$ and $\bfy \in Y(\bfp)$, where $d_{\min} := \min_i d_{\min}^{(i)}$.
    If $d_i(\cdot)$ is a convex CES disutility function for each agent $i$, then $1 - (\min_{i, j} d_{ij})^{-1} \leq y_{j} \leq 1$ for any $j \in [m]$ and $\bfy \in Y(\bfp)$.
    \label{lem:eta-y-bound}
\end{lemma}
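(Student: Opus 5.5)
We need to show $y_j\le 1$ (trivial, since $y_j = 1-\sum_i x_{ij}$ with $x_{ij}\ge 0$) and the lower bound $\sum_i x_{ij} \le 1/d_{\min}$. Here $\Delta_B$ is presumably $\{\bfp\ge 0: \sum_j p_j = \sum_i B_i = 1\}$.

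\textbf{Step 1: compare each demand to the uniform bundle.} Fix $\bfp\in\Delta_B$ and $\bfx_i\in X_i(\bfp)$. The bundle $B_i\mathbf{1}_m$ earns $\inp{\bfp}{B_i\mathbf{1}_m}=B_i\sum_j p_j = B_i$, so it is feasible for agent $i$'s problem. Optimality of $\bfx_i$ and $1$-homogeneity with $d_i(\mathbf{1}_m)=1$ (\cref{assump:wlog}) then give
\[
d_i(\bfx_i)\le d_i(B_i\mathbf{1}_m)=B_i .
\]

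\textbf{Step 2: lower-bound the disutility by a single coordinate.} By \cref{assump:strictly-increasing}, $d_i$ is monotone, so $d_i(\bfx_i)\ge d_i(x_{ij}\mathbf{e}_j)=x_{ij}\,d_i(\mathbf{e}_j)\ge x_{ij}\,d_{\min}^{(i)}$, again using homogeneity. Since the domain is $\RR^m_+$, monotonicity in each coordinate extends to zeroing out the other coordinates. Combining with Step 1 gives $x_{ij}\le B_i/d_{\min}^{(i)}\le B_i/d_{\min}$. Summing over $i$ and using $\sum_i B_i=1$ yields $\sum_i x_{ij}\le 1/d_{\min}$, hence $y_j\ge 1-d_{\min}^{-1}$. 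Positivity $d_{\min}^{(i)}>0$ holds because $d_i$ is positive on nonzero vectors.

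\textbf{Step 3: the CES case.} Here $d_i(\mathbf{e}_j)=(d_{ij}^\rho)^{1/\rho}=d_{ij}$, so $d_{\min}=\min_{i,j}d_{ij}$, and the bound specializes directly.

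The only subtle point is Step 1: we must confirm that $\Delta_B$ fixes $\sum_j p_j=1$, so that the uniform bundle scaled by $B_i$ is affordable. We also need the argmin to be attained, i.e.\ $X_i(\bfp)$ nonempty, but the statement concerns only $\bfy\in Y(\bfp)$, so if $Y(\bfp)$ is empty there is nothing to prove. The remaining steps are immediate.
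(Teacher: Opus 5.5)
Your proposal is correct and matches the paper's proof essentially step for step. Like the paper, you use feasibility of the uniform bundle $B_i\mathbf{1}_m$ (from $\sum_j p_j = 1$) and $d_i(\mathbf{1}_m)=1$ to get $d_i(\bfx_i)\le B_i$, then use monotonicity and $1$-homogeneity to get $x_{ij}\,d^{(i)}_{\min}\le B_i$, sum over agents, and specialize via $d_i(\mathbf{e}_j)=d_{ij}$ for CES.
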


Define $\ell_0 = \frac{1}{3m} \min_{i} \big\{ \min \{{d'_{ij}}/{d'_{ij'}}\ |\ {j, j' \in [m]},  \tilde{x}_{ij} \geq  \frac{1}{2 n m}, \tilde{x}_{ij'} \leq 2m B_i,  \bfd'_i \in \partial d_i(\tilde{\bfx}_i) \} \big\} > 0$.
\begin{lemma}[Strict positiveness of prices]
    Let $\beta := 1 + (d_{\min})^{-1} > 0$.
    Let $\{ \bfp^k \}_{k \geq 0}$ be a sequence of iterates generated by~\eqref{alg:mul-ttm} with the initial point $\bfp^0 \in \textnormal{relint}\left( \Delta_B \right)$ and $\eta^k \leq \frac{1}{2\beta}$ for all $k$. 
    Then, we have 
    $(i)$ $\bfp^k \in \Delta_B$ for all $k \geq 0$, 
    $(ii)$ $p^{k+1}_j > p^k_j + \frac{1}{2}\eta^k$ if $p^k_j \leq \ell_0$ for any $k \geq 0$ and $j \in [m]$, 
    $(iii)$ if additionally $\sum_{k=0}^\infty \eta^k = \infty$, then there exists a finite index $k_0 \geq 0$ such that $p^k_j \geq \frac{\ell_0}{2}\; \forall\, j \in [m]$ for all $k \geq k_0$.
    \label{lem:discrete-time-multiplicative-tatonnement-properties}
\end{lemma}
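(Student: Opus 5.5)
The plan is to prove the three parts in order. The key intermediate fact is that a chore with small price has excess supply at least $\tfrac12$.

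\emph{Part (i).} Here $\Delta_B=\{\bfp\ge 0:\sum_j p_j=\sum_i B_i=1\}$. I argue by induction on $k$. Suppose $\bfp^k\in\Delta_B$ with all coordinates positive. By \cref{prop:sum-p-zeta-equal-0},
\[
\textstyle\sum_j p^{k+1}_j=\sum_j p^k_j+\eta^k\inp{\bfp^k}{\bfy^k}=\sum_j p^k_j .
\]
For positivity, \cref{lem:eta-y-bound} gives $y^k_j\ge 1-d_{\min}^{-1}>-\beta$. With $\eta^k\le \frac{1}{2\beta}$ this yields $1+\eta^k y^k_j>\frac12$. Hence $p^{k+1}_j>\tfrac12 p^k_j>0$, and the iterates stay in $\textnormal{relint}(\Delta_B)$. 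The bound $p^{k+1}_j>\tfrac12p^k_j$ is reused in part (iii).

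\emph{Part (ii), the main step.} I will show that $p^k_j\le\ell_0$ implies $y^k_j>\tfrac12$. Fix an agent $i$ and a demanded bundle $\tilde\bfx_i\in X_i(\bfp^k)$. The budget binds, so $\inp{\bfp^k}{\tilde\bfx_i}=B_i$. The KKT conditions give a subgradient $\bfd'_i\in\partial d_i(\tilde\bfx_i)$ and a multiplier $\lambda_i>0$ such that $d'_{ij}\ge\lambda_i p^k_j$ for all $j$, with equality whenever $\tilde x_{ij}>0$.

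Since $\sum_{j'}p^k_{j'}=1$, some chore $j'$ has $p^k_{j'}\ge 1/m$, and therefore $\tilde x_{ij'}\le B_i/p^k_{j'}\le mB_i\le 2mB_i$. Suppose for contradiction that $\tilde x_{ij}\ge\frac{1}{2nm}$. Then
\[
\frac{d'_{ij}}{d'_{ij'}}\le\frac{\lambda_i p^k_j}{\lambda_i p^k_{j'}}\le m\ell_0<3m\ell_0 .
\]
This contradicts the definition of $\ell_0$, which makes every such ratio at least $3m\ell_0$. So $\tilde x_{ij}<\frac{1}{2nm}$ for every $i$. Summing over agents gives $\sum_i\tilde x_{ij}<\frac1{2m}\le\frac12$, hence $y^k_j>\tfrac12$.

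Consequently $p^{k+1}_j-p^k_j=\eta^k p^k_j y^k_j>\tfrac12\eta^k p^k_j$. This is an increase of at least half a step, but measured relative to $p^k_j$.

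The additive form as literally printed, $p^{k+1}_j>p^k_j+\tfrac12\eta^k$, cannot be derived from the update rule. The increment is $\eta^k p^k_jy^k_j\le\eta^k p^k_j\le\eta^k\ell_0<\tfrac12\eta^k$, because $y^k_j\le 1$ by \cref{lem:eta-y-bound} and $\ell_0<\tfrac12$. So the printed form contradicts $p^{k+1}_j=p^k_j(1+\eta^k y^k_j)$. I therefore read (ii) as the bound $p^{k+1}_j>p^k_j+\tfrac12\eta^kp^k_j=p^k_j(1+\tfrac12\eta^k)$, which is what the argument above gives and what part (iii) needs. I expect this reading to be the main point to reconcile with the authors.

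\emph{Part (iii).} Take $k$ with $p^k_j\le\ell_0$, and suppose $p^t_j\le\ell_0$ for all $t\ge k$. Then part (ii) gives
\[
p^t_j\ge p^k_j\prod_{s=k}^{t-1}\bigl(1+\tfrac12\eta^s\bigr).
\]
The product diverges because $\sum_s\eta^s=\infty$ and $\eta^s$ is bounded. This is a contradiction, so every coordinate exceeds $\ell_0$ after finitely many steps; take the maximum of these times over the $m$ chores.

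Afterwards the coordinate stays at least $\ell_0/2$. If $p^t_j>\ell_0$, then part (i) gives $p^{t+1}_j>\tfrac12p^t_j>\ell_0/2$. If $\ell_0/2\le p^t_j\le\ell_0$, then part (ii) gives $p^{t+1}_j>p^t_j\ge\ell_0/2$. Induction on $t$ completes the proof.
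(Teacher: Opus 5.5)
Your proof is correct and follows the paper's three-step structure. For (i) it uses Walras's law together with the factor $1+\eta^k y^k_j > \tfrac12$. For (ii) it uses the excess-supply barrier $y^k_j > 1-\tfrac{1}{2m}$ whenever $p^k_j \le \ell_0$, which you re-derive from the KKT conditions and the definition of $\ell_0$ where the paper cites \cref{lem:excess-demand-barrier}. For (iii) it uses a first passage above $\ell_0$ followed by the two-case induction.

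Your objection to the printed additive form of (ii) is also right: the increment $\eta^k p^k_j y^k_j$ is at most $\eta^k \ell_0 < \tfrac12\eta^k$, since $\ell_0 \le \tfrac{1}{3m}$ (take $j=j'$ in its definition). The barrier lemma only yields $p^{k+1}_j > p^k_j(1+\tfrac12\eta^k)$. The paper's argument for (iii) still goes through with your multiplicative version, because $p^0_j > 0$ and $\prod_s (1+\tfrac12\eta^s) \ge 1+\tfrac12\sum_s \eta^s$ diverges.
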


We then prove the convergence of multiplicative t\^atonnement by linking it to the following updates:
\begin{equation}
    \bar{\bfp}^{k + 1} := \bfp^k \odot \exp{(\eta \bfy^k)},
    \label{eq:entropy-ttm}
\end{equation}
which is mirror descent w.r.t. the relative entropy for the following objective function~\citep[Lemma 6]{chaudhury2025t}
\begin{equation}
    f(\bfp) := - \displaystyle\sum_{j = 1}^m p_j + \sum_{i = 1}^n B_i \log\left( \max_{\bfx_i \geq 0: d_i(\bfx_i) \leq 1} \inp{\bfp}{\bfx_i} \right).
    \label{def:function-f}
\end{equation}
Note that \eqref{eq:entropy-ttm} itself does not converge since the function $f$ is not globally lower bounded.
In particular, the objective value goes to $-\infty$ as $\bfp \rightarrow \mathbf{0}_m$, or as $p_j \rightarrow \infty$ for any $j \in [m]$.
However, it serves as a useful reference process for proving the per-step improvement: 
multiplicative t\^atonnement tracks \eqref{eq:entropy-ttm} within the price simplex $\Delta_B$, and thus inherits the iterate descent property up to a small perturbation.
Since $f$ is lower bounded within $\Delta_B$, the iterate converges to a stationary point asymptotically.


Formally, \eqref{alg:mul-ttm} dynamics fits into the following Bregman subgradient update scheme~\citep{ding2024stochastic}: 
\begin{equation}
    \bfp^{k + 1} \approx \argmin_{\bfq} \{ \eta^k \inp{-\bfy^k}{\bfq} + \KL(\bfq \Vert \bfp^k) \}, 
    \label{eq:bregman-subgradient-update-1}
\end{equation}
and the perturbation error is controlled by 
\begin{equation}
    \norm*{\frac{\nabla h_{\textnormal{KL}}(\bfp^{k + 1}) - \nabla h_{\textnormal{KL}}(\bfp^k)}{\eta^k} - \bfy^k}_2 \leq \nu^k.
    \label{eq:bregman-subgradient-update-2}
\end{equation}

\cite{ding2024stochastic} prove the convergence of the above process under a set of regularity conditions. 
We adapt their result to our setting and state it as the following proposition.
\begin{proposition}
    Let $\{ \bfp^k \}_{k \geq 0}$ be a sequence of iterates generated by Bregman subgradient update described in~\cref{eq:bregman-subgradient-update-1,eq:bregman-subgradient-update-2}, and suppose that the process satisfies that 
    $(1)$ the sequences $\{ \bfp^k \}_{k \geq 0}$, $\{ \log{(\bfp^k)} \}_{k \geq 0}$, and $\{ \bfy^k \}_{k \geq 0}$ are uniformly bounded almost surely,
    $(2)$ the stepsizes satisfies $\sum_{k=0}^\infty \eta^k = \infty$, and either $\eta^k = o(\frac{1}{\log{k}})$ or $\sum_{k=0}^\infty (\eta^k)^2 < \infty$, 
    $(3)$ the perturbation error satisfies $\lim_{k \rightarrow \infty} \nu^k \rightarrow 0$, 
    $(4)$ there is a potential function $f$ such that $-\bfy^k \in \partial f(\bfp^k)$\footnote{$\partial f(\bfp^k)$ denotes Clarke differential of function $f$ at $\bfp^k$.}, and $f$ is lower bounded, 
    $(5)$ the critical value set $\{ f(\bfp) \mid \mathbf{0}_m \in \partial f(\bfp) \}$ has empty interior in $\mathbb{R}$. 
    Then, almost surely, any cluster point of $\{ \bfp^k \}_{k \geq 0}$ is a critical point and the function values $\{ f(\bfp^k) \}_{k \geq 0}$ converge.
    \label{prop:ding-proposition}
\end{proposition}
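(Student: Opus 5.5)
The plan is to obtain the statement as a deterministic-perturbation special case of the stochastic-approximation (``ODE method'') argument of~\cite{ding2024stochastic}. We work in the dual (mirror) coordinates $\bfz^k := \nabla h_{\textnormal{KL}}(\bfp^k) = \log \bfp^k + \mathbf{1}_m$. Condition~\eqref{eq:bregman-subgradient-update-2} says exactly that
\[
\bfz^{k+1} = \bfz^k + \eta^k(\bfy^k + \mathbf{e}^k), \qquad \norm{\mathbf{e}^k}_2 \le \nu^k \to 0 .
\]
By condition $(4)$, $\bfy^k \in -\partial f(\bfp^k)$ with $\bfp^k = \exp(\bfz^k - \mathbf{1}_m)$. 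Hence $\{\bfz^k\}$ is a perturbed Euler scheme for the differential inclusion
\[
\dot{\bfz}(t) \in -\partial f\bigl(\nabla h^*(\bfz(t))\bigr).
\]

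\textbf{Step 1: asymptotic pseudotrajectory.} Condition $(1)$ confines the dual iterates $\bfz^k$ to a compact set, so $\bfp^k$ stays in a compact region bounded away from the boundary. On that region the map $\nabla h^* = \exp(\cdot - \mathbf{1}_m)$ is Lipschitz and $\nabla^2 h$ is uniformly positive definite and bounded. The Clarke subdifferential $\partial f$ is outer semicontinuous with nonempty compact convex values, so the composed set-valued field is a Marchaud map there.

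Condition $(2)$ gives $\eta^k \to 0$ and $\sum_k \eta^k = \infty$; the $o(1/\log k)$ alternative is the form~\cite{ding2024stochastic} use to absorb noise, which is harmless here. Together with $\nu^k \to 0$ and the boundedness of $\bfy^k$, the standard Bena\"im--Hofbauer--Sorin theorem then applies. It shows that the piecewise-linear interpolation of $\{\bfz^k\}$ is an asymptotic pseudotrajectory of the inclusion, so its limit set is internally chain transitive.

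\textbf{Step 2: $f$ is a Lyapunov function.} Let $\bfz(\cdot)$ be any absolutely continuous solution of the inclusion and set $\bfp(t) = \nabla h^*(\bfz(t))$, so that $\dot\bfz = -\bfg$ with $\bfg(t) \in \partial f(\bfp(t))$. If a chain rule holds, we get
\[
\tfrac{d}{dt} f(\bfp(t)) = -\inp{\bfg}{\nabla^2 h^*(\bfz)\,\bfg} \le -c\norm{\bfg}_2^2 .
\]
In that case $f$ strictly decreases along solutions outside the critical set $\{\bfp : \mathbf{0}_m \in \partial f(\bfp)\}$. With $f$ bounded below (condition $(4)$) and the critical values having empty interior (condition $(5)$), Bena\"im's Lyapunov/Sard-type proposition yields both conclusions: every point of the limit set is critical, and $f(\bfp^k)$ converges. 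Passing back through the homeomorphism $\nabla h^*$ transfers the statement from $\bfz^k$ to cluster points of $\bfp^k$. Because the perturbation is deterministic, the ``almost surely'' in the statement holds trivially.

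\textbf{Main obstacle: the chain rule.} The step I expect to be hardest is justifying the chain rule (path-differentiability) for the nonsmooth $f$ in~\eqref{def:function-f}. My first attempt is to show $f$ is Clarke regular on the positive orthant.
\begin{itemize}
\item The inner map $\bfp \mapsto \max_{d_i(\bfx_i)\le 1}\inp{\bfp}{\bfx_i}$ is a support function, hence convex.
\item It is strictly positive for $\bfp > \mathbf{0}_m$.
\item Composing with the smooth increasing $\log$ preserves Clarke regularity.
\item Adding the linear term $-\sum_j p_j$ preserves it as well.
\end{itemize}
Regular locally Lipschitz functions admit the chain rule along absolutely continuous curves. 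If this route fails, I would fall back on semialgebraic or definable structure, which holds for CES disutilities, to obtain path-differentiability and hence the chain rule.
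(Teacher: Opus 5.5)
The paper gives no argument of its own for this proposition. It imports the result from Ding and Toh~\cite{ding2024stochastic}, whose proof is the same differential-inclusion (ODE-method) machinery you sketch. Your proposal is therefore correct and best read as a self-contained, specialized reconstruction of what the citation hides.

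Your version buys two things. First, because the perturbation $\mathbf{e}^k$ is deterministic and vanishing, you rightly need only $\eta^k \to 0$ and $\sum_k \eta^k = \infty$. Working in $\mathbf{z} = \nabla h_{\textnormal{KL}}(\bfp)$ also turns the scheme into an unconstrained perturbed Euler scheme on all of $\RR^m$, with Lyapunov function $f \circ \nabla h^*$. The one routine point you gloss over is that linear growth fails globally: the Clarke subgradients of $\log$ of the support function grow like $1/\norm{\bfp}_\infty$ as $\bfp \to \mathbf{0}_m$. So you must truncate the field outside a neighborhood of the compact set containing the iterates before invoking Bena\"im--Hofbauer--Sorin.

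Second, and more substantively, your argument exposes that the chain rule along absolutely continuous curves is an additional hypothesis, not implied by (1)--(5) as stated. It is exactly the property whose absence permits the pathological subgradient dynamics of Daniilidis and Drusvyatskiy. Ding and Toh's framework, built on conservative fields, builds this hypothesis in, and the paper's restatement silently drops it. So strictly you prove the proposition for path-differentiable $f$, which is the right version.

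Your Clarke-regularity argument verifies this for the paper's potential under arbitrary CCH disutilities. The support function is convex and positive on $\RR^m_{++}$, composing with $\log$ preserves regularity by Clarke's chain rule with positive outer derivative, and the linear term is harmless. This makes the definability fallback unnecessary, since it would only cover CES-type instances.

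The citation, in turn, buys generality: stochastic noise, general Bregman kernels, and inexact subproblems.
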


Therefore, by combining~\cref{lem:eta-y-bound,lem:discrete-time-multiplicative-tatonnement-properties}, and choosing the proper stepsizes to control the perturbation error, we can prove the convergence of~\eqref{alg:mul-ttm}.
\begin{theorem}[Asymptotic convergence]
    Let $\{ \eta^k \}_{k \geq 0}$ be a sequence of stepsizes satisfying (i) $0 < \eta^k \leq \frac{1}{2\beta}$, (ii) $\sum_{k = 0}^\infty \eta^k = \infty$, and (iii) $\eta^k = o(\frac{1}{\log{k}})$ or $\sum_{k=0}^\infty (\eta^k)^2 < \infty$.
    Let $\{ \bfp^k \}_{k \geq 0}$ be a sequence of iterates generated by~\eqref{alg:mul-ttm} with $\{ \eta^k \}_{k \geq 0}$ and any initial point $\bfp^0 \in \textnormal{relint}(\Delta_B)$. 
    Then, every limit point of $\{ \bfp^k \}_{k\geq 0}$ is a CE of the chores Fisher market.
    \label{thm:asymptotic-convergence}
\end{theorem}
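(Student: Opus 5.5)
The plan is to verify the five hypotheses of \cref{prop:ding-proposition} for the iterates of \eqref{alg:mul-ttm}, with $f$ as in \eqref{def:function-f} restricted to $\Delta_B$. Then we identify the critical points of $f$ on $\Delta_B$ that can arise as limit points with CE prices.

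\textbf{Step 1 (boundedness, condition (1)).} By \cref{lem:discrete-time-multiplicative-tatonnement-properties}(i), every $\bfp^k$ lies in $\Delta_B$, which is compact, so $\{\bfp^k\}$ is bounded. By part (iii), which uses $\sum_k\eta^k=\infty$, we have $p^k_j\ge \ell_0/2$ for all $k\ge k_0$. The finitely many earlier iterates are strictly positive because $\bfp^0\in\textnormal{relint}(\Delta_B)$ and $1+\eta^k y^k_j\ge 1-\eta^k\beta\ge 1/2$. Hence $\{\log\bfp^k\}$ is bounded. \Cref{lem:eta-y-bound} bounds $\{\bfy^k\}$. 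Condition (2) is assumed.

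\textbf{Step 2 (perturbation, condition (3)).} Since $\nabla h_{\textnormal{KL}}(\bfp)=\log\bfp+\mathbf{1}$, the perturbation is
\[
\nu^k=\norm*{\tfrac{1}{\eta^k}\log(\mathbf{1}+\eta^k\bfy^k)-\bfy^k}_2 .
\]
We have $|\eta y|\le \eta\beta\le 1/2$, and $|\log(1+t)-t|\le t^2$ for $|t|\le1/2$. Therefore $\nu^k\le \eta^k\norm{\bfy^k}^2_2\le \eta^k m\beta^2$.

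Both alternatives in (iii) force $\eta^k\to0$. This is immediate when $\eta^k=o(1/\log k)$, and holds when $\sum_k(\eta^k)^2<\infty$ because the terms of a convergent series vanish. Hence $\nu^k\to0$.

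\textbf{Step 3 (potential, conditions (4)--(5)).} By \citep[Lemma 6]{chaudhury2025t}, $-\bfy\in\partial f(\bfp)$ for $\bfy\in Y(\bfp)$. The analysis is then done on $\Delta_B$, or on a compact subset bounded away from zero. There $f$ is continuous, since the inner max is a positive, continuous, 1-homogeneous function of $\bfp$, and hence $f$ is lower bounded. To be rigorous, I would restrict $f$ to the compact set $\{\bfp\in\Delta_B: p_j\ge\ell_0/2\}$ or extend it off that set without changing it near the iterates.

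Condition (5) is the delicate one. I would argue that every critical point of $f$ with all coordinates positive satisfies $\mathbf{0}\in\partial f(\bfp)$. Since $\partial f(\bfp)=-\textnormal{conv}\,Y(\bfp)$, and $Y(\bfp)$ is convex by convexity of the demand sets for convex $d_i$, this means $\mathbf 0\in Y(\bfp)$, i.e., $\bfp$ is a CE. Moreover, $f$ should be constant on the set of CE prices; I expect this is shown in \citep{chaudhury2025t} via the dual EG characterization. If instead the CE value set is merely finite or of measure zero, that still suffices. Either way the critical value set has empty interior.

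\textbf{Step 4 (conclusion).} Apply \cref{prop:ding-proposition}: every cluster point $\bar\bfp$ of $\{\bfp^k\}$ is critical. A cluster point lies in $\Delta_B$ with $\bar p_j\ge\ell_0/2$, so it is in the interior region. By Step 3, $\mathbf 0\in Y(\bar\bfp)$, so $\bar\bfp$ is a CE.

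One subtlety is that the Clarke critical condition might be stated relative to the affine hull of $\Delta_B$. In that case it only gives $\mathbf 0\in \partial f(\bar\bfp)+\RR\mathbf 1$, i.e., $\bfy=c\mathbf 1$ for some $\bfy\in Y(\bar\bfp)$. Walras' law (\cref{prop:sum-p-zeta-equal-0}) gives $\inp{\bar\bfp}{c\mathbf1}=0$, and $\sum_j\bar p_j>0$, so $c=0$.

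I expect the main obstacle to be condition (5), the empty-interior property of the critical values, together with making the ambient/relative Clarke framework match. The first thing I would try is to show that $f$ takes a constant value on the CE set.
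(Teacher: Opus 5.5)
Your Steps 1, 2 and 4 follow the paper's verification of the hypotheses of \cref{prop:ding-proposition}. Your bound $\nu^k \le \eta^k\norm{\bfy^k}_2^2 \le m\beta^2\eta^k$, the positivity of the finitely many early iterates, and the Walras argument ruling out $\bfy = c\mathbf{1}$ are, if anything, more careful than the paper's.

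\textbf{The gap is condition (5).} The route you say you would try first, showing that $f$ is constant on the CE set, fails. Chores markets typically have several CE with different values of $f$. At any CE, homogeneity gives $f(\bfp^*) = -1 + \sum_i B_i\log B_i - \sum_i B_i\log d_i(\bfx_i^*)$. So $f$ is constant on CE only if all CE share the same weighted Nash product of disutilities, and in chores markets they usually do not.

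Here is a concrete instance. Take two agents with $B_1=B_2=\frac12$ and linear disutilities $d_1(\bfx_1)=0.2x_{11}+0.8x_{12}$ and $d_2(\bfx_2)=0.8x_{21}+0.2x_{22}$. These are CCH, strictly increasing, and normalized as in \cref{assump:wlog}. For linear disutilities, $\max_{d_i(\bfx_i)\le 1}\inp{\bfp}{\bfx_i}=\max_j p_j/d_{ij}$. There are two CE to compare:
\begin{itemize}
\item At $\bfp=(0.5,0.5)$, agent~1 does chore~1 and agent~2 does chore~2. This gives $f=-1+\log 2.5$.
\item At $\bfp=(0.2,0.8)$, agent~1 does all of chore~1 plus $0.375$ of chore~2, and agent~2 does $0.625$ of chore~2. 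This gives $f=-1+\log 2$.
\end{itemize}

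So what you need is that the set of critical values (equivalently, CE values of $f$) is small, not a single point. Your fallback, that finite or measure-zero suffices, names the right target but supplies no argument. The paper closes this step by invoking that the stationary points of $f$ on $\Delta_B$ are isolated and finite in number. Hence the critical values form a finite set, which has empty interior. To complete your proof you must supply that finiteness, or a substitute. One option is a Sard-type theorem for locally Lipschitz functions definable in an o-minimal structure, whose Clarke critical values are finite. That covers linear or CES disutilities, but not arbitrary CCH ones.
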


\begin{remark}
    Notice that the modulus $\ell_0$ can be very small in the worst case.
    For example, convex CES disutilities with a large $\rho$ cause $\ell_0 = O((D / m)^{\rho - 1})$ where $D = \max_i (\max_j d_{ij}/\min_j d_{ij})$.
    This can be problematic for relative t\^atonnement, since its stepsize must scale with $\ell_0$ to keep prices strictly positive\footnote{We do not consider price projection, since it would require each chore manager to have access to the prices of other chores.}.
    In contrast, multiplicative t\^atonnement dynamics admits stepsizes independent of $\ell_0$, since its updates scale multiplicatively with the prices.
\end{remark}

\section{Convergence rates to an approximate CE under convex CES disutilities}
\label{sec:convergence-rate}

We further study convergence rates of multiplicative tâtonnement to find an $\varepsilon$-approximate CE.
We focus on the class of convex CES disutility function, 
i.e., $d_i(\bfx_i) = (\sum_{j = 1}^m (d_{ij} x_{ij})^\rho)^{1/\rho}$ with $\rho \in (1, \infty)$.
Since the inner-log maximization problem in~\cref{def:function-f} ensures its value bounded away from zero, and has a unique maximizer, function $f$ is differentiable: 
\begin{equation*}
    f(\bfp) = - \sum_{j = 1}^m p_j + \sum_{i = 1}^n \frac{B_i}{\sigma} \log{\Big( \sum_{j = 1}^m d_{ij}^{-\sigma} p_j^\sigma \Big)}, \quad \sigma = \frac{\rho}{\rho - 1} \in (1, \infty).
\end{equation*}

To study the convergence rate of~\eqref{alg:relative-ttm} dynamics, analyze its smoothness with respect to the squared Euclidean norm, which is the standard notion of smoothness.
The convergence rate of~\eqref{alg:mul-ttm} dynamics, by contrast, is governed by relative smoothness with respect to $h_{\textnormal{KL}}$.
By leveraging the fact that $\bfp \in \Delta_B$, we obtain an improved relative smoothness than the standard one.
In particular, it establishes the smoothness over entire $\Delta_B$ for the full spectrum of $\rho \in (1, \infty)$, and the modulus does \emph{not depend} on a lower bound on the prices.
This relative smoothness then lays the foundation for a tighter convergence rate.
All proofs in this section is in~\cref{app:sec:convergence-rate}.

\begin{lemma}[Relative smoothness of $f$ w.r.t. $h_{\mathrm{KL}}$]
    For any two price vectors $\bfp, \bfq \in \Delta_B$, we have 
    \begin{equation*}
        f(\bfp) \leq f(\bfq) + \langle \nabla f(\bfq), \bfp - \bfq \rangle + (\sigma - 1)m^{\sigma - 1}(\min_{i,j} d_{ij})^{-\sigma} \KL(\bfp, \bfq).
    \end{equation*}
    \label{lem:relative-smoothness-KL}
\end{lemma}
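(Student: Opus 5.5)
The plan is a second-order (integral Taylor) comparison between $f$ and $h_{\textnormal{KL}}$ along the segment joining $\bfq$ to $\bfp$. Since $\Delta_B$ is convex, the whole segment stays in $\Delta_B$.

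\emph{Reduction.} The linear term $-\sum_j p_j$ contributes nothing to the Bregman remainder. So it suffices to show, for each agent $i$, that $g_i(\bfp) := \frac{1}{\sigma}\log S_i(\bfp)$ with $S_i(\bfp) := \sum_j a_{ij} p_j^\sigma$ and $a_{ij} := d_{ij}^{-\sigma}$ satisfies
\[
g_i(\bfp) - g_i(\bfq) - \langle \nabla g_i(\bfq), \bfp-\bfq\rangle \le (\sigma-1)\, m^{\sigma-1} (\min_{j} d_{ij})^{-\sigma}\, \KL(\bfp,\bfq).
\]
Summing with weights $B_i$ and using $\sum_i B_i = 1$ from \cref{assump:wlog} then gives the claim. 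I will first treat $\bfp, \bfq \in \textnormal{relint}(\Delta_B)$, so that everything is smooth along the segment, and pass to the boundary by continuity of both sides at the end.

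\emph{Hessian computation and comparison.} Write $w_j := a_{ij} r_j^\sigma / S_i(\mathbf{r})$, so that $\partial_j g_i(\mathbf{r}) = w_j / r_j$. Differentiating again, using $\partial_k w_j = \sigma w_j \delta_{jk}/r_j - \sigma w_j w_k / r_k$, gives
\[
\nabla^2 g_i(\mathbf{r}) = (\sigma-1)\,\diag\!\big(w_j / r_j^2\big) - \sigma\, (w/r)(w/r)^\top \preceq (\sigma-1)\,\diag\!\big(w_j/r_j^2\big).
\]
Since $\nabla^2 h_{\textnormal{KL}}(\mathbf{r}) = \diag(1/r_j)$, it is enough to bound $w_j / r_j = a_{ij} r_j^{\sigma-1} / S_i(\mathbf{r})$ uniformly on $\Delta_B$.

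\emph{Key bound (the main step).} This is where membership in $\Delta_B$ is used. Because $\sum_j r_j = \sum_i B_i = 1$, we have $r_j^{\sigma-1} \le 1$, so the numerator is at most $\max_j a_{ij} = (\min_j d_{ij})^{-\sigma}$. For the denominator, the normalization $d_i(\mathbf{1}_m) = 1$ forces $d_{ij} \le 1$, hence $a_{ij} \ge 1$. The power-mean inequality then gives
\[
S_i(\mathbf{r}) \ge \sum_j r_j^\sigma \ge m^{1-\sigma}\Big(\sum_j r_j\Big)^\sigma = m^{1-\sigma}.
\]
Together these yield $w_j/r_j \le m^{\sigma-1} (\min_j d_{ij})^{-\sigma} =: C_i$, with no dependence on any lower bound on prices. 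Getting exactly the stated constant hinges on this step, in particular on using $a_{ij} \ge 1$ in the denominator rather than a ratio $a_{\max}/a_{\min}$.

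\emph{Integration.} Let $\mathbf{r}(t) = \bfq + t(\bfp - \bfq)$. The Bregman remainder of $g_i$ equals $\int_0^1 (1-t)\,(\bfp-\bfq)^\top \nabla^2 g_i(\mathbf{r}(t))\,(\bfp-\bfq)\,dt$. By the previous two steps this is at most
\[
(\sigma-1)\, C_i \int_0^1 (1-t) \sum_j \frac{(p_j-q_j)^2}{r_j(t)}\,dt.
\]
The latter integral is exactly the same integral representation of $D_{h_{\textnormal{KL}}}(\bfp,\bfq) = \KL(\bfp,\bfq)$. Finally, I would extend the inequality to boundary points of $\Delta_B$ by taking limits from the relative interior; both sides are continuous wherever they are finite, and the inequality is trivial when $\KL = \infty$.
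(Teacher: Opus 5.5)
Your proposal is correct and follows essentially the paper's proof. Both arguments use the same Hessian formula with the rank-one term dropped, the same Jensen/power-mean bound $\sum_j p_j^\sigma \ge m^{1-\sigma}$, and the same use of $d_{ij}\le 1$; the paper first obtains the ratio $(\max_j d_{ij}/\min_j d_{ij})^\sigma$ and then invokes $\max_j d_{ij}\le 1$. Your Taylor-integral step just spells out the fact the paper cites, that $\nabla^2 f \preceq L\,\nabla^2 h_{\mathrm{KL}}$ on a convex set gives relative smoothness.

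One small fix is needed in your boundary extension: $\KL$ is not jointly continuous at coordinates where $p_j=q_j=0$. Instead, approximate by $\bfp^n=(1-\tfrac1n)\bfp+\tfrac1n\mathbf{u}$ and $\bfq^n=(1-\tfrac1n)\bfq+\tfrac1n\mathbf{u}$ with a fixed $\mathbf{u}\in\textnormal{relint}(\Delta_B)$, and use joint convexity to get $\KL(\bfp^n,\bfq^n)\le(1-\tfrac1n)\KL(\bfp,\bfq)$. The paper does not treat the boundary at all, and only interior iterates are needed later.
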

\begin{remark}
    Note that the na\"ive global KL-relative smoothness of $f$ still has an $O(1/\ell_0)$ smoothness modulus, which means that the constraint $\sum_{j=1}^m p_j = \norm{\bfB}_1$
    is necessary to eliminate the singularity caused by the logarithm. 
    However, even under this constraint, $f$ still has an $O(1/\ell_0)$ smoothness modulus with respect to $\frac{1}{2}\lVert \cdot \rVert^2$, due to the mismatch between the Euclidean geometry and the geometry induced by $h_{\textnormal{KL}}$.
\end{remark}

With this tool in hand, we next prove a convergence rate to compute an $\varepsilon$-approximate CE.
We have
\begin{equation*}
    \bfp^{k + 1} = \argmin_{\bfq \geq 0} \{ \eta^k \inp{-\bfy^k + \nu^k}{\bfq} + \KL(\bfq \Vert \bfp^k) \}
\end{equation*}
where the perturbation $\nu^k$ is defined as 
\begin{equation*}
    \nu^k := -\frac{\log{(1 + \eta^k \bfy^k)}}{\eta^k} + \bfy^k.
\end{equation*}

By standard mirror descent analysis and~\cref{lem:relative-smoothness-KL}, we have the following descent lemma: 
\begin{lemma}[Descent property]
    Let $\{ \bfp^k \}_{k \geq 0}$ be a sequence of iterates generated by~\eqref{alg:mul-ttm} dynamics with any positive stepsizes $\{ \eta^k \}_{k \geq 0}$ and any initial point $\bfp^0 \in \textnormal{relint}(\Delta_B)$.
    We have that 
    \begin{equation}
        f(\bfp^{k + 1}) \leq f(\bfp^k) - \frac{1}{\eta^k} \left( \KL(\bfp^{k + 1}, \bfp^k) \right) - \inp{\nu^k}{\bfp^{k+1} - \bfp^k} + L \KL(\bfp^{k + 1}, \bfp^k).
        \label{eq:descent-inequality}
    \end{equation}
    \label{lem:descent-inequality}
\end{lemma}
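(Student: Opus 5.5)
The plan is to combine the three-point (optimality) inequality of the perturbed mirror-descent step with the relative smoothness bound of \cref{lem:relative-smoothness-KL}, taking $L := (\sigma-1)m^{\sigma-1}(\min_{i,j}d_{ij})^{-\sigma}$. We will use three ingredients.

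\begin{enumerate}
\item \emph{Iterates stay in $\Delta_B$.} By Walras's law (\cref{prop:sum-p-zeta-equal-0}), $\sum_j p^{k+1}_j = \sum_j p^k_j + \eta^k\inp{\bfp^k}{\bfy^k} = \sum_j p^k_j$.
\item \emph{Positivity.} We need $1+\eta^k y^k_j>0$. This holds under the stepsize condition of \cref{lem:discrete-time-multiplicative-tatonnement-properties}; for general positive stepsizes we simply take it as implicit in the well-definedness of $\log(1+\eta^k\bfy^k)$ in the definition of $\nu^k$. By induction, all $\bfp^k\in\Delta_B$ and are strictly positive, so \cref{lem:relative-smoothness-KL} applies to the pair $(\bfp^{k+1},\bfp^k)$.
\item \emph{Gradient identification.} For convex CES disutilities, $f$ is differentiable on the positive orthant with $\nabla f(\bfp)=-\bfy$ for the unique $\bfy\in Y(\bfp)$. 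This is the envelope-theorem computation behind \citep[Lemma 6]{chaudhury2025t}.
\end{enumerate}

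\textbf{First-order condition.} Let $\bfg^k := -\bfy^k+\nu^k = -\log(1+\eta^k\bfy^k)/\eta^k$. Then
\[
\log p^{k+1}_j - \log p^k_j = -\eta^k g^k_j
\]
holds exactly. This is precisely the first-order optimality condition of the unconstrained problem $\min_{\bfq\geq 0}\{\eta^k\inp{\bfg^k}{\bfq}+\KL(\bfq\Vert\bfp^k)\}$, since $\nabla_{\bfq}\KL(\bfq,\bfp^k)=\log(\bfq/\bfp^k)$ and the minimizer is interior.

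\textbf{Main identity.} Take the inner product of the condition above with $\bfp^{k+1}-\bfp^k$:
\[
\eta^k\inp{\bfg^k}{\bfp^{k+1}-\bfp^k} = -\inp{\log(\bfp^{k+1}/\bfp^k)}{\bfp^{k+1}-\bfp^k} = -\KL(\bfp^{k+1},\bfp^k)-\KL(\bfp^k,\bfp^{k+1}),
\]
using the symmetrized-Bregman identity. Dropping the nonnegative term $\KL(\bfp^k,\bfp^{k+1})$ gives
\[
\inp{-\bfy^k}{\bfp^{k+1}-\bfp^k} \leq -\frac{1}{\eta^k}\KL(\bfp^{k+1},\bfp^k) - \inp{\nu^k}{\bfp^{k+1}-\bfp^k}.
\]

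\textbf{Conclusion.} Apply \cref{lem:relative-smoothness-KL} with $\bfp=\bfp^{k+1}$, $\bfq=\bfp^k$, and $\nabla f(\bfp^k)=-\bfy^k$:
\[
f(\bfp^{k+1})\leq f(\bfp^k)+\inp{-\bfy^k}{\bfp^{k+1}-\bfp^k}+L\,\KL(\bfp^{k+1},\bfp^k).
\]
Substituting the bound from the main identity yields \eqref{eq:descent-inequality}.

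\textbf{Main obstacle.} The only delicate point is the gradient identification $\nabla f(\bfp^k)=-\bfy^k$, together with ensuring that all iterates lie in $\Delta_B$ so the relative smoothness lemma is applicable. Everything else is an exact algebraic identity. As a sanity check on the $\KL$ conventions, note that with the unnormalized $\KL$ the linear terms $-p_j+q_j$ cancel in the symmetrized sum, so the identity holds as stated.
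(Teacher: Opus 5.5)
Your proposal is correct and follows the paper's proof essentially step for step. The argument uses the exact first-order identity $\log(\bfp^{k+1}/\bfp^k) = \eta^k(\bfy^k - \nu^k)$, the symmetrized-Bregman identity producing $\KL(\bfp^{k+1},\bfp^k)+\KL(\bfp^k,\bfp^{k+1})$, the relative-smoothness bound of \cref{lem:relative-smoothness-KL} with $\nabla f(\bfp^k) = -\bfy^k$, and finally drops $\KL(\bfp^k,\bfp^{k+1}) \geq 0$. Your explicit check that the iterates stay strictly positive and in $\Delta_B$ (so that $\nu^k$ is well defined and the smoothness lemma applies) is something the paper leaves implicit despite saying ``any positive stepsizes,'' and it is a worthwhile addition.
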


Define $\norm{\bfy}_{\bfp} := \sqrt{\sum_{j=1}^m p_j (y_j)^2}$, we can then link the last two terms in~\cref{eq:descent-inequality} to $\norm{\bfy^k}_{\bfp^k}^2$.
To do so, we need the stepsizes to be uniformly small to control $\norm{\eta^k \bfy^k}_\infty$.
\begin{lemma}
    If $\eta^k \leq \frac{1}{2\beta}$ for all $k \geq 0$, then we have $- \inp{\nu^k}{\bfp^{k+1} - \bfp^k} \leq \beta(\eta^k)^2 \norm{\bfy^k}_{\bfp^k}^2$and $\KL(\bfp^{k + 1}, \bfp^k) \geq \frac{1}{3} (\eta^k)^2 \norm{\bfy^k}_{\bfp^k}^2$.
    \label{lem:link-distances}
\end{lemma}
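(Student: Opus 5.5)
The plan is to reduce both inequalities to scalar inequalities, one per coordinate, in the variable $t_j := \eta^k y^k_j$, and then sum them with weights $p^k_j$. Throughout we drop the superscript $k$ when this causes no confusion. By the update rule, $p^{k+1}_j - p^k_j = p^k_j t_j$. By definition, $\nu^k_j = y_j - \log(1+t_j)/\eta^k = \bigl(t_j - \log(1+t_j)\bigr)/\eta^k$.

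\textbf{Step 1: bound $|t_j| \le 1/2$.} By \cref{lem:discrete-time-multiplicative-tatonnement-properties}(i), $\bfp^k \in \Delta_B$. So \cref{lem:eta-y-bound} applies and gives $1 - (d_{\min})^{-1} \le y_j \le 1$. Hence
\[
|y_j| \le \max\{1, (d_{\min})^{-1} - 1\} \le 1 + (d_{\min})^{-1} = \beta .
\]
With $\eta^k \le \frac{1}{2\beta}$ this yields $|t_j| \le \frac12$. In particular $1 + t_j \ge \frac12 > 0$, so all logarithms are well defined.

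\textbf{Step 2: the perturbation term.} We have
\[
-\inp{\nu^k}{\bfp^{k+1}-\bfp^k} = -\sum_j p^k_j\, t_j\,\frac{t_j - \log(1+t_j)}{\eta^k}.
\]
The quantity $\phi(t) := t - \log(1+t)$ is nonnegative. I will show $\phi(t) \le t^2$ on $[-\frac12, \frac12]$. One way is the Taylor expansion with remainder, $\phi(t) = \frac{t^2}{2(1+\xi)^2}$ for some $\xi$ between $0$ and $t$, which is at most $2t^2$ in general. A sharper route is to check directly that $t^2 - \phi(t)$ has derivative $2t - \frac{t}{1+t}$, which has the sign of $t$ for $t \ge -\frac12$, so $t^2 - \phi(t)$ is minimized at $0$. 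Each summand is then bounded by
\[
\frac{p^k_j\, |t_j|\, t_j^2}{\eta^k} = p^k_j\, (\eta^k)^2\, |y_j|\, y_j^2 \le \beta\, (\eta^k)^2\, p^k_j\, y_j^2,
\]
using $|y_j| \le \beta$ from Step 1. Summing over $j$ gives the first claim.

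\textbf{Step 3: the KL term.} Substituting $p^{k+1}_j = p^k_j(1+t_j)$ into the KL formula gives
\[
\KL(\bfp^{k+1}, \bfp^k) = \sum_j p^k_j\, g(t_j), \qquad g(t) := (1+t)\log(1+t) - t .
\]
Here $g(0) = g'(0) = 0$ and $g''(t) = \frac{1}{1+t} \ge \frac{2}{3}$ for $t \le \frac12$. Taylor's theorem then gives $g(t) \ge \frac13 t^2$ on $[-\frac12, \frac12]$. Summing yields $\KL(\bfp^{k+1}, \bfp^k) \ge \frac13 (\eta^k)^2 \norm{\bfy^k}_{\bfp^k}^2$.

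The only step needing care is Step 1, namely making sure $|y_j| \le \beta$ on both sides. This requires combining the invariance $\bfp^k \in \Delta_B$ with \cref{lem:eta-y-bound}. The scalar inequalities in Steps 2 and 3 are routine once $|t_j| \le \frac12$ is in hand.
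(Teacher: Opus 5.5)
Your proof is correct and takes essentially the same route as the paper: bound $|y^k_j|\le\beta$ using $\bfp^k\in\Delta_B$ and \cref{lem:eta-y-bound}, then apply the scalar inequalities $|\log(1+x)-x|\le x^2$ and $(1+x)\log(1+x)-x\ge x^2/3$ for $|x|\le\frac12$ coordinatewise, and sum with weights $p^k_j$. There are two cosmetic differences. You substitute directly into the generalized KL formula, so the linear terms cancel without invoking Walras's law $\inp{\bfp^k}{\bfy^k}=0$ (which the paper uses). You also verify both scalar inequalities, which the paper only asserts.
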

Combining~\cref{lem:descent-inequality,lem:link-distances} yields the convergence rate.
\begin{theorem}[Convergence rate guarantee]
    Let $\{ \bfp^k \}_{k \geq 0}$ be a sequence of iterates generated by~\eqref{alg:mul-ttm} with any initial point $\bfp^0 \in \textnormal{relint}(\Delta_B)$ and any positive stepsizes $\{ \eta^k \}_{k \geq 0}$ satisfying 
    \begin{equation*}
        \eta^k \equiv \eta \leq \frac{1}{12(L + \beta)}, 
    \end{equation*}
    then we can find an iterate $(\bfp, \bfy)$ satisfies $\norm{\bfy}_{\bfp} \leq \varepsilon$ in $\lceil
    {12(f(\bfp^0) - \underline{f})}/{\eta \varepsilon^2} \rceil$ iterations.
    Here, $L = (\sigma-1)\,m^{\sigma-1}(\min_{i,j} d_{ij})^{-\sigma}$ and $\beta = 1 + (\min_{i,j} d_{ij})^{-1}$.
    \label{thm:convergence-rate}
\end{theorem}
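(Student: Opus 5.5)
The argument telescopes the descent inequality of \cref{lem:descent-inequality} after converting each of its terms into a multiple of $\norm{\bfy^k}_{\bfp^k}^2$ using \cref{lem:link-distances}.

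First, the iterates remain in $\Delta_B$ by \cref{lem:discrete-time-multiplicative-tatonnement-properties}(i). That result requires $\eta \le \frac{1}{2\beta}$, and this follows from $\eta \le \frac{1}{12(L+\beta)}$. Consequently \cref{lem:relative-smoothness-KL} applies with the stated $L$, the descent inequality \eqref{eq:descent-inequality} holds at every $k$, and \cref{lem:eta-y-bound} in its CES form yields the constant $\beta = 1+(\min_{i,j}d_{ij})^{-1}$ required by \cref{lem:link-distances}.

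Next, rewrite \eqref{eq:descent-inequality} as
\begin{equation*}
f(\bfp^{k+1}) \le f(\bfp^k) - \Big(\frac{1}{\eta} - L\Big)\KL(\bfp^{k+1},\bfp^k) - \inp{\nu^k}{\bfp^{k+1}-\bfp^k}.
\end{equation*}
Because $\eta \le 1/(12L)$, the coefficient $\frac1\eta - L$ is positive. We may therefore apply the lower bound $\KL(\bfp^{k+1},\bfp^k) \ge \frac13\eta^2\norm{\bfy^k}_{\bfp^k}^2$, and we bound the perturbation term above by $\beta\eta^2\norm{\bfy^k}_{\bfp^k}^2$, both from \cref{lem:link-distances}. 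Together these give
\begin{equation*}
f(\bfp^{k+1}) \le f(\bfp^k) - \Big(\frac{\eta}{3} - \frac{L\eta^2}{3} - \beta\eta^2\Big)\norm{\bfy^k}_{\bfp^k}^2.
\end{equation*}
Since $\eta(L+\beta) \le \frac1{12}$, we have $\frac{L\eta^2}{3} + \beta\eta^2 \le \eta^2(L+\beta) \le \frac{\eta}{12}$. The bracket is therefore at least $\frac{\eta}{3}-\frac{\eta}{12} = \frac{\eta}{4} \ge \frac{\eta}{12}$, which is enough slack to deliver the constant $12$ in the statement.

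Then sum this inequality over $k=0,\dots,K-1$. The left side telescopes to $f(\bfp^0) - f(\bfp^K)$, which is at most $f(\bfp^0) - \underline f$, where $\underline f := \inf_{\bfp\in\Delta_B} f(\bfp)$. The result is
\begin{equation*}
\min_{k<K}\norm{\bfy^k}_{\bfp^k}^2 \le \frac{12\,(f(\bfp^0)-\underline f)}{\eta K}.
\end{equation*}
Choosing $K = \lceil 12(f(\bfp^0)-\underline f)/(\eta\varepsilon^2)\rceil$ makes the right side at most $\varepsilon^2$, so some iterate up to index $K-1$ satisfies $\norm{\bfy}_{\bfp}\le\varepsilon$.

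The main obstacle is showing that $\underline f$ is finite, which requires $f$ to be bounded below on $\Delta_B$. As $\bfp \to \mathbf 0_m$, the log term could diverge to $-\infty$. On $\Delta_B$ this cannot happen: $\sum_j p_j = 1$ gives $\max_j p_j \ge 1/m$, so $\sum_j d_{ij}^{-\sigma}p_j^\sigma \ge (\max_j d_{ij})^{-\sigma} m^{-\sigma} > 0$. In addition, $-\sum_j p_j = -1$ is constant on $\Delta_B$. Hence $f$ is bounded below on $\Delta_B$ and $\underline f > -\infty$. Combined with the invariance $\bfp^k\in\Delta_B$, this makes the telescoping argument valid.
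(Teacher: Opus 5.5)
Your proposal is correct and follows the paper's proof: combine \cref{lem:descent-inequality} with the two bounds of \cref{lem:link-distances} under $\eta(L+\beta)\le \frac{1}{12}$, then telescope against $\underline{f}$. The only differences are minor. You keep the coefficient $\frac{1}{\eta}-L$ intact and get a per-step decrease of $\frac{\eta}{4}\norm{\bfy^k}_{\bfp^k}^2$, whereas the paper first uses $\frac{1}{\eta}-L\ge\frac{1}{2\eta}$ and gets $\frac{\eta}{12}$; you also check $\bfp^k\in\Delta_B$ and $\underline{f}>-\infty$ directly, where the paper relies on \cref{lem:discrete-time-multiplicative-tatonnement-properties} and \cref{prop:lipschitz}.
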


By taking $\varepsilon = \sqrt{\ell_0 / 2} \varepsilon$ and $\eta = {1}/{(12(L + B))}$, we directly have the following corollary.
\begin{corollary}
    \eqref{alg:mul-ttm} dynamics with $\eta = {1}/{(12(L + \beta))}$ can find an $\varepsilon$-approximate CE in $\lceil
    {288(L + \beta)(f(\bfp^0) - \underline{f})}/{\ell_0\varepsilon^2} \rceil$ iterations.
    \label{crl:convergence-rate-approx-CE}
\end{corollary}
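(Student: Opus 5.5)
The plan is to convert the guarantee $\norm{\bfy}_{\bfp} \leq \varepsilon'$ from \cref{thm:convergence-rate} into a Euclidean bound $\norm{\bfy}_2 \leq \varepsilon$, and then invoke the fact from the preliminaries (proved in \cref{app:sec:chores-FM}) that any $\bfp$ admitting some $\bfy \in Y(\bfp)$ with $\norm{\bfy}_2 \leq \varepsilon$ is an $\varepsilon$-approximate CE in the sense of \cref{def:approx-CE-2}. The link between the two norms is a uniform lower bound on the prices. If $p_j \geq \ell_0/2$ for every $j$, then
\[
\norm{\bfy}_{\bfp}^2 = \sum\nolimits_j p_j y_j^2 \geq \tfrac{\ell_0}{2}\norm{\bfy}_2^2 .
\]
So it suffices to run \cref{thm:convergence-rate} with target accuracy $\varepsilon' = \sqrt{\ell_0/2}\,\varepsilon$. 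The theorem then needs
\[
\Big\lceil \tfrac{12(f(\bfp^0)-\underline f)}{\eta\,(\ell_0/2)\varepsilon^2} \Big\rceil
\]
iterations. Substituting $\eta = 1/(12(L+\beta))$ gives $\lceil 288(L+\beta)(f(\bfp^0)-\underline f)/(\ell_0\varepsilon^2)\rceil$. This $\eta$ satisfies the stepsize condition of the theorem with equality, and it is at most $1/(2\beta)$, so \cref{lem:link-distances} and \cref{lem:discrete-time-multiplicative-tatonnement-properties} both apply.

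The main obstacle is justifying $p_j \geq \ell_0/2$ at the particular iterate returned by \cref{thm:convergence-rate}. Part (iii) of \cref{lem:discrete-time-multiplicative-tatonnement-properties} only gives this bound after some unquantified index $k_0$. I would first prove a quantitative invariant instead: if $p_j^0 \geq \ell_0/2$ for all $j$, then $p_j^k \geq \ell_0/2$ for all $k$. The induction splits into two cases.

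\begin{itemize}
\item If $p_j^k \leq \ell_0$, part (ii) of the lemma gives $p_j^{k+1} > p_j^k \geq \ell_0/2$.
\item If $p_j^k > \ell_0$, the multiplicative form together with $\eta y_j \geq -\eta(\beta-1) \geq -1/2$ (from \cref{lem:eta-y-bound} and $\eta \leq 1/(2\beta)$) gives $p_j^{k+1} \geq p_j^k/2 > \ell_0/2$.
\end{itemize}

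Hence every iterate satisfies the lower bound, provided the initialization does. I would state the corollary under that initialization, for example the scaled uniform point $\bfp^0 = \mathbf{1}_m/m$ when $\ell_0 \leq 2/m$, which holds by the $\frac{1}{3m}$ factor in $\ell_0$.

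For a general interior start, I would instead bound $k_0$ explicitly. By part (ii), each coordinate below $\ell_0$ grows by at least $\eta/2$ per step, so it reaches $\ell_0/2$ within $\ell_0/\eta$ steps. By the case analysis above it then never drops back below $\ell_0/2$. Applying \cref{thm:convergence-rate} from that point, with the descent potential still controlled because $f$ is nonincreasing up to a nonpositive perturbation by \cref{lem:descent-inequality,lem:link-distances}, adds only a lower-order additive term.

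Finally, given the returned pair $(\bfp,\bfy)$ with $\bfy \in Y(\bfp)$ and $\norm{\bfy}_2 \leq \varepsilon$, I would take the allocation $\bfx$ whose demand bundles realize $\bfy$. Conditions 1 and 2 of \cref{def:approx-CE-2} hold exactly by the optimality of demands. Condition 3 holds because $|1 - \sum_i x_{ij}| \leq \varepsilon$ implies $1-\varepsilon \leq \sum_i x_{ij} \leq 1+\varepsilon \leq 1/(1-\varepsilon)$. This completes the corollary.
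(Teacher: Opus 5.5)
Your argument is the paper's. The paper gets the corollary in one line by running \cref{thm:convergence-rate} with accuracy $\sqrt{\ell_0/2}\,\varepsilon$ and $\eta = 1/(12(L+\beta))$, implicitly using $p_j \ge \ell_0/2$ to pass from $\norm{\bfy}_{\bfp}$ to $\norm{\bfy}_2$.

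Your invariant supplies the step the paper never justifies: if $p^0_j \ge \ell_0/2$ for all $j$, then $p^k_j \ge \ell_0/2$ for all $k$. Some such restriction is genuinely necessary. Since $y^k_j \le 1$ gives $p^{k+1}_j \le (1+\eta)p^k_j$, a coordinate starting at $p^0_j$ stays below $\ell_0$ for all $k < \eta^{-1}\log(\ell_0/p^0_j)$. By \cref{lem:excess-demand-barrier} it then has $y^k_j > 1/2$, so no such iterate is an $\varepsilon$-approximate CE. Meanwhile the stated count is bounded uniformly over $\bfp^0 \in \Delta_B$, because $f$ is bounded there (\cref{prop:lipschitz}).

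One correction to your general-start remark. Under the multiplicative rule, the barrier lemma gives $p^{k+1}_j > (1+\eta/2)\,p^k_j$, not $p^k_j + \eta/2$. Part (ii) of \cref{lem:discrete-time-multiplicative-tatonnement-properties} is misstated, apparently carried over from the additive dynamics. Your invariant only uses $p^{k+1}_j > p^k_j$, so it is unaffected. The burn-in, however, is $O\bigl(\eta^{-1}\log(\ell_0/\min_j p^0_j)\bigr)$ rather than $\ell_0/\eta$. So the bound exactly as stated holds under an initialization such as $\bfp^0 = \mathbf{1}_m/m$. For a general interior start it holds only up to this initialization-dependent additive term, which is not lower order when $\min_j p^0_j$ is tiny.
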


\paragraph{Discussions.}
\emph{First}, \cref{thm:convergence-rate} shows that we can find a price vector for which $\sum_{j=1}^m p_j (y_j)^2 \leq \varepsilon$ in $\mathcal{O}(1/\varepsilon^2)$ iterations, without any dependence on $1/\ell_0$.
In other words, for any price coordinate that is not close to zero, an $\varepsilon$-small excess supply is guaranteed.
We emphasize that as $\rho \rightarrow \infty$, $1/\ell_0$ grows exponentially fast and can therefore be problematic.
Since $L + \beta = (\sigma - 1)m^{\sigma - 1} (\min_{i,j} d_{ij})^{-\sigma} + 1 + (\min_{i,j} d_{ij})^{-1}$,
and $\sigma \rightarrow 1$ as $\rho \rightarrow \infty$, this convergence rate ensures that multiplicative t\^atonnement finds a meaningful price vector even as $\rho \rightarrow \infty$.

\emph{Second}, as we detail in~\cref{app:subsec:comparison-relative-tatonnement}, relative t\^atonnement has a dependence of
$\mathcal{O}\big(
    \frac{(\sigma - 1)(\min_{i,j} d_{ij})^{-(\sigma+1)}}{\sqrt{m}\ell_0 \varepsilon^2}
\big)$,
even under an analysis sharper than the original one in~\cite{chaudhury2025t}.
In contrast, multiplicative t\^atonnement improves this rate by a factor of $(\min_{i,j} d_{ij})^{-1}$.
As we will see in~\cref{sec:numerical-experiments}, this factor dominates in most real-world instances and can be very large.

\emph{Third}, \cref{crl:convergence-rate-approx-CE} indicates that we can compute an approximate CE using a constant stepsize that is much larger than that allowed by relative t\^atonnement.
In~\cref{sec:numerical-experiments}, we show that this larger stepsize can substantially improve the performance of multiplicative t\^atonnement.




\section{Numerical experiments}
\label{sec:numerical-experiments}

We conduct numerical experiments on three types of datasets: Spliddit instances, AAMAS bidding instances, and synthetic instances. For each type, we test three convex CES disutilities with $\rho \in \{1.2, \ 2, \ 5\}$. We compare the number of iterations required to compute an $\varepsilon$-approximate equilibrium, where $\varepsilon \in \{0.1, \ 0.05, \ 0.01, \ 0.005, \ 0.001\}$.
The experiments were implemented on a personal desktop with Apple M1 chip.

\paragraph{Spliddit instances.}
\begin{figure}[t]
    \centering
    \includegraphics[width=1\linewidth]{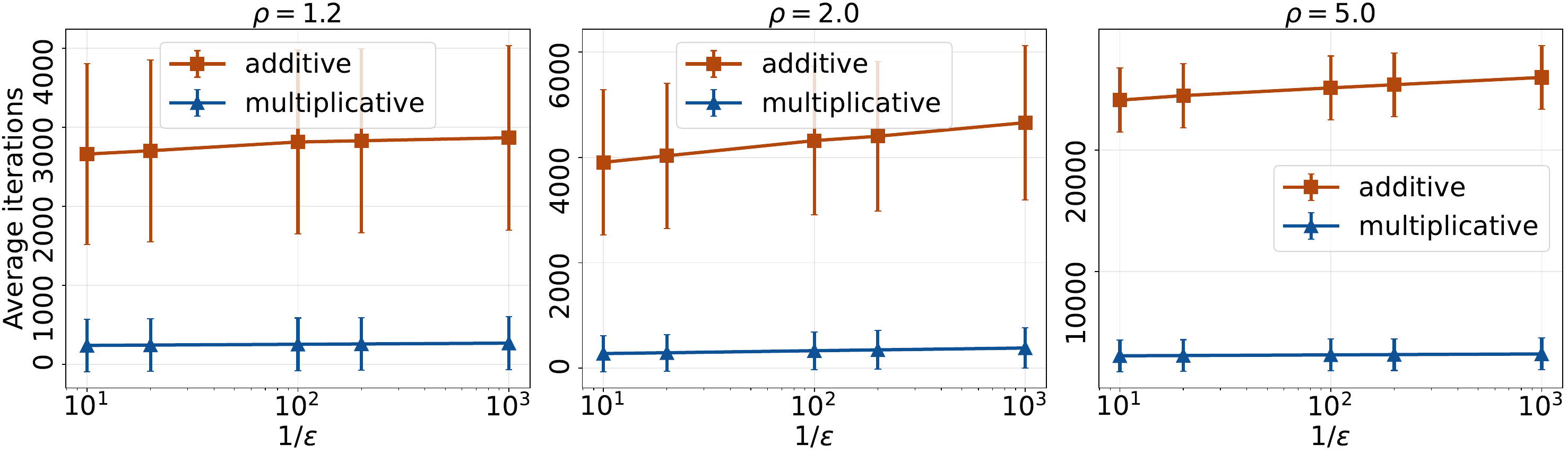}
    
    \caption{Multiplicative T\^atonnement v.s. Relative T\^atonnement on Spliddit instances.
    The marker denotes the mean value of the number of iterations required to compute an $\varepsilon$-approximate CE over all instances.
    The error bar corresponds to $0.1$ standard deviation.}
    \label{fig:spliddit}
\end{figure}
Spliddit is a research-driven online platform that helps users divide resources fairly and transparently~\citep{goldman2015spliddit}. We use a dataset containing $670$ task-distribution instances from Spliddit. The instance sizes range from $2$ to $6$ agents and from $2$ to $37$ tasks. We test all instances in the dataset and report the average number of iterations across different approximation tolerances $\varepsilon$ in~\cref{fig:spliddit}.

We use the following heuristic to select practical stepsizes for both algorithms. 
For each instance, we start with a large stepsize and iteratively search for an upper bound on the admissible stepsizes. 
If a trajectory leaves the price simplex, we set the current stepsize as an upper bound on the admissible range. 
We then perform a grid search (excluding the stepsize upper bound) over the admissible range using $9$ grid points, from the largest to the smallest. 
If all tested stepsizes are admissible, we choose the one that achieves the best performance, measured by the smallest approximation tolerance.
We also use this method to select practical stepsizes for other types of instances.

\paragraph{AAMAS bidding instances.}
We construct a set of AAMAS bidding instances from a dataset of bids submitted by PC members at the AAMAS conference for potential papers to review. This dataset is obtained from PrefLib~\citep{DBLP:conf/aldt/MatteiW13} and is also used by~\cite{chaudhury2024competitive}.
We convert reviewer bids into a disutility matrix using the mapping \texttt{yes:1}, \texttt{maybe:3}, \texttt{missing:5}, \texttt{no:7}, and \texttt{conflict:1401}. For each instance, we choose a random seed paper, take its nearest \texttt{100} papers under the squared distance between bidding profiles, and then select the \texttt{200} reviewers with the most positive responses on this paper set. Finally, we optionally add Gaussian noise to the resulting \texttt{100$\times$200} disutility matrix.
For each $\rho$, we sample \texttt{10} such instances and set all reviewers’ budgets to one, as is standard in fair-division settings. The results are shown in~\cref{fig:bidding}.

\paragraph{Synthetic instances.}
\begin{figure}[t]
    \centering
    \includegraphics[width=1\linewidth]{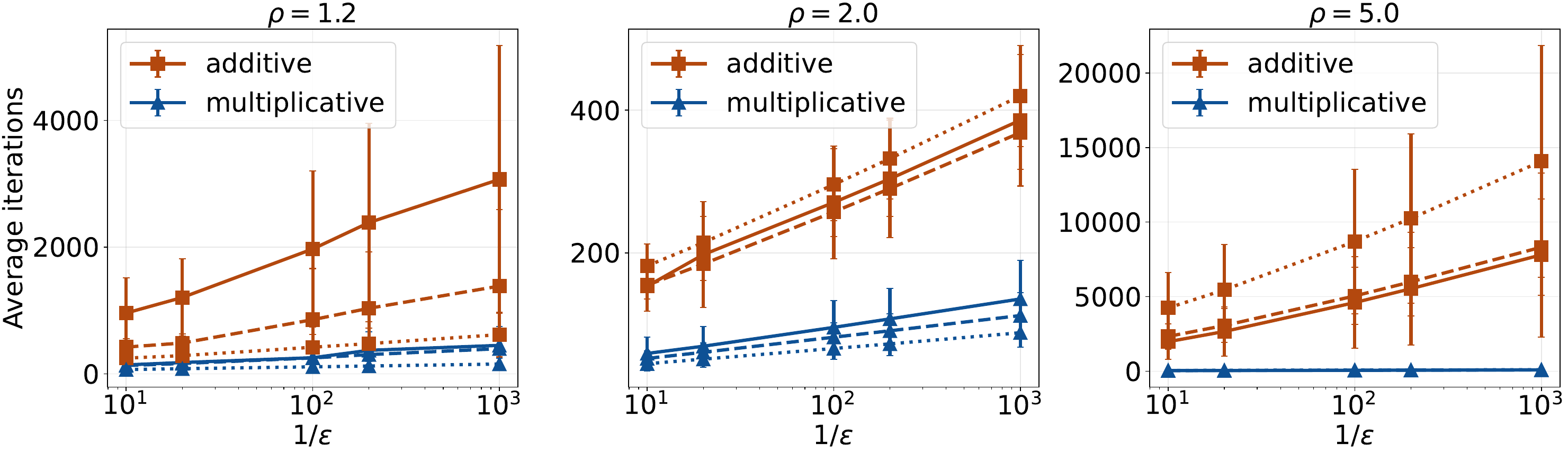}
    \caption{Multiplicative T\^atonnement v.s. Relative T\^atonnement on synthetic instances with truncated normal distributions.
    Solid lines correspond to $100 \times 500$ instances, dashed lines correspond to $100 \times 1000$ instances, dotted lines correspond to $100 \times 2000$ instances.
    The error bar corresponds to $0.1$ standard deviation.}
    \label{fig:synthetic-truncated_normal}
\end{figure}
We simulate a set of large instances to compare the empirical convergence rates of the two dynamics. We generate instances with $100$ agents and ${500,1000,2000}$ chores, reflecting the fact that, in many real applications, the number of chores is much larger than the number of agents. For each problem size, we test the dynamics on $20$ instances generated from $20$ random seeds.
We consider two valuation distributions: (1) a lognormal distribution with location parameter $0$ and scale parameter $1$; and (2) a truncated normal distribution with mean $0.5$ and standard deviation $0.2$, truncated below at $0.01$. We set each agent's budget to one. The results are shown in~\cref{fig:synthetic-truncated_normal,fig:synthetic-lognormal}.

\begin{figure}[h]
    \centering
    \includegraphics[width=1\linewidth]{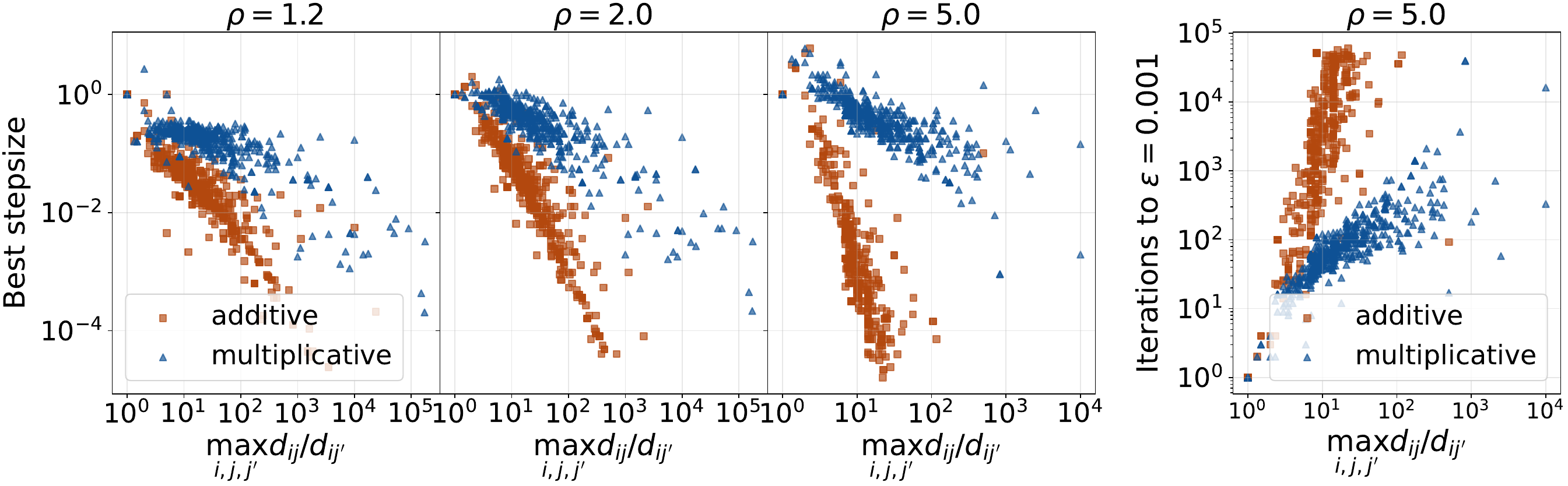}
    \caption{The best stepsizes in practice \emph{or} the number of iterations to compute an approximate CE v.s. maximum ratio of the disutility coefficients with $\rho \in \{ 1.2, \ 2, \ 5 \}$.
    Each point in the plots corresponding to a Spliddit instance that finds an approximate CE within the iteration budget.}
    \label{fig:max-value-ratio}
\end{figure}
\paragraph{Discussions.}
Overall, multiplicative t\^atonnement significantly outperforms relative t\^atonnement. 
On Spliddit instances, multiplicative t\^atonnement requires fewer than $100$ iterations to find a $0.001$-approximate CE, even when $\rho = 5$. In contrast, relative t\^atonnement may require a very large number of iterations to find an approximate CE, and this number increases as $\rho$ grows. In particular, relative t\^atonnement can take more than $20{,}000$ iterations to find a $0.1$-approximate CE.
On AMMAS bidding instances, although both dynamics compute approximate CE within $100$ iterations, multiplicative t\^atonnement still reduces the iteration cost by roughly half. These instances are relatively easy to solve because their value sets are sparse, containing only $5$ distinct values. On synthetic instances, multiplicative t\^atonnement typically achieves a \texttt{$10\times$} speedup, even for small values of $\rho$.

An interesting observation is the relationship between the best practical stepsizes and the maximum value ratio $\max_{i,j,j'} d_{ij}/d_{ij'}$. 
We illustrate this relationship based on the Spliddit instances with scatter plots in~\cref{fig:max-value-ratio}. 
The plots reveal a clear pattern: as $\max_{i,j,j'} d_{ij}/d_{ij'}$ increases, the best stepsize decreases at a rate consistent with our analysis. 
In particular, the largest admissible stepsize appears to be inversely proportional to $(\max_{i,j,j'} d_{ij}/d_{ij'})^\rho$.
This relationship becomes linear on a log-log plot, where the slope reflects the exponent. 
As shown in the three plots on the left, the slopes for relative t\^atonnement are approximately $1.2$, $2$, and $5$, respectively. 
In contrast, the slope for multiplicative t\^atonnement changes only mildly.

Additionally, from the right plot in~\cref{fig:max-value-ratio}, 
as $\max_{i, j, j'} d_{ij} / d_{ij'}$ increases, 
the increase of the number of iterations appearantly has a separation between two dynamics: 
the iteration cost of multiplicative t\^atonnement dynamics scale up at a $\tfrac{1}{2}$ rate of that of relative t\^atonnement, 
which also matches our theoretical results as $\rho \rightarrow \infty$ and thus $\sigma \rightarrow 1$: 
multiplicative t\^atonnement has a dependency $(d_{\min})^{-1} + (d_{\min})^{-\sigma}$ and relative t\^atonnement has a dependency $(d_{\min})^{-1-\sigma}$. 

\section*{Acknowledgments}
The research of Bhaskar Ray Chaudhury was supported by NSF CAREER Grant CCF-2441580.
The research of Christian Kroer was supported by the Office of Naval Research awards N00014-22-1-2530 and N00014-23-1-2374, and the National Science Foundation awards IIS-2147361 and IIS-2238960.
The research of Ruta Mehta was supported by NSF Grant CCF-2334461.



\bibliographystyle{plain}
\bibliography{refs}

@article{chaudhury2025t,
  title={Tâtonnement Dynamics for Fisher Markets with Chores},
  author={Chaudhury, Bhaskar Ray and Kroer, Christian and Mehta, Ruta and Nan, Tianlong},
  journal={arXiv preprint arXiv:2511.21162},
  year={2025}
}

@inproceedings{chaudhury2024competitive,
  title={Competitive Equilibrium for Chores: from Dual Eisenberg-Gale to a Fast, Greedy, LP-based Algorithm},
  author={Chaudhury, Bhaskar Ray and Kroer, Christian and Mehta, Ruta and Nan, Tianlong},
  booktitle={Proceedings of the 25th ACM Conference on Economics and Computation},
  pages={40--40},
  year={2024}
}

@article{lu2018relatively,
  title={Relatively smooth convex optimization by first-order methods, and applications},
  author={Lu, Haihao and Freund, Robert M and Nesterov, Yurii},
  journal={SIAM Journal on Optimization},
  volume={28},
  number={1},
  pages={333--354},
  year={2018},
  publisher={SIAM}
}

@article{ding2024stochastic,
  title={Stochastic Bregman Subgradient Methods for Nonsmooth Nonconvex Optimization Problems},
  author={Ding, Kuangyu and Toh, Kim-Chuan},
  journal={arXiv preprint arXiv:2404.17386},
  year={2024}
}

@article{bauschke2017descent,
  title={A descent lemma beyond Lipschitz gradient continuity: first-order methods revisited and applications},
  author={Bauschke, Heinz H and Bolte, J{\'e}r{\^o}me and Teboulle, Marc},
  journal={Mathematics of Operations Research},
  volume={42},
  number={2},
  pages={330--348},
  year={2017},
  publisher={Informs}
}

@article{goldman2015spliddit,
  title={Spliddit: Unleashing fair division algorithms},
  author={Goldman, Jonathan and Procaccia, Ariel D},
  journal={ACM SIGecom Exchanges},
  volume={13},
  number={2},
  pages={41--46},
  year={2015},
  publisher={ACM New York, NY, USA}
}

@article{arrow1958stability,
  title={On the stability of the competitive equilibrium, I},
  author={Arrow, Kenneth J and Hurwicz, Leonid},
  journal={Econometrica: Journal of the Econometric Society},
  pages={522--552},
  year={1958},
  publisher={JSTOR},
  url={https://www.jstor.org/stable/1907515}
}

@article{samuelson1941stability,
  title={The stability of equilibrium: comparative statics and dynamics},
  author={Samuelson, Paul A},
  journal={Econometrica: Journal of the Econometric Society},
  pages={97--120},
  year={1941},
  publisher={JSTOR},
  doi={10.2307/1906872}
}

@article{bogomolnaia2017competitive,
  title={Competitive division of a mixed manna},
  author={Bogomolnaia, Anna and Moulin, Herv{\'e} and Sandomirskiy, Fedor and Yanovskaya, Elena},
  journal={Econometrica},
  volume={85},
  number={6},
  pages={1847--1871},
  year={2017},
  publisher={Wiley Online Library},
  doi={10.3982/ECTA14564}
}

@inproceedings{garg2020computing,
  title={Computing competitive equilibria with mixed manna},
  author={Garg, Jugal and McGlaughlin, Peter},
  booktitle={AAMAS Conference proceedings},
  year={2020}
}

@article{devanur2008market,
	title={Market equilibrium via a primal-dual algorithm for a convex program},
	author={Devanur, Nikhil R and Papadimitriou, Christos H and Saberi, Amin and Vazirani, Vijay V},
	journal={Journal of the ACM (JACM)},
	volume={55},
	number={5},
	pages={1--18},
	year={2008},
	publisher={ACM New York, NY, USA},
    doi={10.1145/1411509.1411512}
}

@article{cole2019balancing,
	title={Balancing the Robustness and Convergence of Tatonnement},
	author={Cole, Richard and Tao, Yixin},
	journal={arXiv preprint arXiv:1908.00844},
	year={2019},
    doi={https://doi.org/10.48550/arXiv.1908.00844}
}

@article{zhang2011proportional,
	title={Proportional response dynamics in the Fisher market},
	author={Zhang, Li},
	journal={Theoretical Computer Science},
	volume={412},
	number={24},
	pages={2691--2698},
	year={2011},
	publisher={Elsevier},
    doi={https://doi-org.ezproxy.cul.columbia.edu/10.1016/j.tcs.2010.06.021}
}

@article{eisenberg1959consensus,
  title={Consensus of subjective probabilities: The pari-mutuel method},
  author={Eisenberg, Edmund and Gale, David},
  journal={The Annals of Mathematical Statistics},
  volume={30},
  number={1},
  pages={165--168},
  year={1959},
  publisher={JSTOR},
  url={https://www.jstor.org/stable/2237130}
}

@article{hayek1945use,
  title={The use of knowledge in society},
  author={Hayek, Friedrich August},
  journal={The American economic review},
  volume={35},
  number={4},
  pages={519--530},
  year={1945},
  publisher={JSTOR}
}

@article{eaves1975finite,
  title={A finite algorithm for the linear exchange model},
  author={Eaves, B Curtis},
  year={1975},
  doi={10.1016/0304-4068(76)90028-8}
}

@article{nenakov1983one,
  title={One algorithm for finding solutions of the Arrow-Debreu model},
  author={Nenakov, EI and Primak, ME},
  journal={Kibernetica},
  volume={3},
  pages={127--128},
  year={1983}
}

@inproceedings{birnbaum2011distributed,
  title={Distributed algorithms via gradient descent for fisher markets},
  author={Birnbaum, Benjamin and Devanur, Nikhil R and Xiao, Lin},
  booktitle={Proceedings of the 12th ACM conference on Electronic commerce},
  pages={127--136},
  year={2011},
  organization={ACM},
  doi={10.1145/1993574.1993594}
}

@article{shmyrev2009algorithm,
  title={An algorithm for finding equilibrium in the linear exchange model with fixed budgets},
  author={Shmyrev, Vadim I},
  journal={Journal of Applied and Industrial Mathematics},
  volume={3},
  number={4},
  pages={505},
  year={2009},
  publisher={Springer},
  doi={https://doi.org/10.1134/S1990478909040097}
}

@inproceedings{branzei2021proportional,
  title={Proportional dynamics in exchange economies},
  author={Br{\^a}nzei, Simina and Devanur, Nikhil and Rabani, Yuval},
  booktitle={Proceedings of the 22nd ACM Conference on Economics and Computation},
  pages={180--201},
  year={2021},
  doi={10.1145/3465456.3467644}
}

@article{cheung2019tatonnement,
  title={Tatonnement beyond gross substitutes? Gradient descent to the rescue},
  author={Cheung, Yun Kuen and Cole, Richard and Devanur, Nikhil R},
  journal={Games and Economic Behavior},
  year={2019},
  publisher={Elsevier},
  doi={10.1016/j.geb.2019.03.014}
}

@inproceedings{Orlin10,
  author       = {James B. Orlin},
  title        = {Improved algorithms for computing fisher's market clearing prices:
                  computing fisher's market clearing prices},
  booktitle    = {{STOC}},
  pages        = {291--300},
  publisher    = {{ACM}},
  year         = {2010},
  doi={https://doi.org/10.1145/1806689.1806731}
}

@inproceedings{cole2008fast,
  title={Fast-converging tatonnement algorithms for one-time and ongoing market problems},
  author={Cole, Richard and Fleischer, Lisa},
  booktitle={Proceedings of the fortieth annual ACM symposium on Theory of computing},
  pages={315--324},
  year={2008},
  doi={10.1145/1374376.1374422}
}

@article{BranzeiS24,
  author       = {Simina Br{\^{a}}nzei and
                  Fedor Sandomirskiy},
  title        = {Algorithms for Competitive Division of Chores},
  journal      = {Math. Oper. Res.},
  volume       = {49},
  number       = {1},
  pages        = {398--429},
  year         = {2024},
  doi          = {10.1287/moor.2023.1361}
}

@inproceedings{ChaudhuryGMM22,
  author       = {Bhaskar Ray Chaudhury and
                  Jugal Garg and
                  Peter McGlaughlin and
                  Ruta Mehta},
  title        = {Competitive Equilibrium with Chores: Combinatorial Algorithm and Hardness},
  booktitle    = {{EC}},
  pages        = {1106--1107},
  publisher    = {{ACM}},
  year         = {2022},
  doi          = {10.1145/3490486.3538255}
}

@inproceedings{BoodaghiansCM22,
  author       = {Shant Boodaghians and
                  Bhaskar Ray Chaudhury and
                  Ruta Mehta},
  title        = {Polynomial Time Algorithms to Find an Approximate Competitive Equilibrium
                  for Chores},
  booktitle    = {{SODA}},
  pages        = {2285--2302},
  publisher    = {{SIAM}},
  year         = {2022},
  doi          = {10.1137/1.9781611977073.92}
}

@article{duan2015combinatorial,
  title={A combinatorial polynomial algorithm for the linear Arrow--Debreu market},
  author={Duan, Ran and Mehlhorn, Kurt},
  journal={Information and Computation},
  volume={243},
  pages={112--132},
  year={2015},
  publisher={Elsevier},
  doi={10.1016/j.ic.2014.12.009}
}

@article{ye2008path,
  title={A path to the Arrow--Debreu competitive market equilibrium},
  author={Ye, Yinyu},
  journal={Mathematical Programming},
  volume={111},
  number={1},
  pages={315--348},
  year={2008},
  publisher={Springer},
  doi={https://doi.org/10.1007/s10107-006-0065-5}
}

@inproceedings{gao2020first,
  author    = {Yuan Gao and Christian Kroer},
  title     = {First-Order Methods for Large-Scale Market Equilibrium Computation},
  booktitle = {Neural Information Processing Systems 2020, NeurIPS 2020},
  year      = {2020},
  url={https://dl.acm.org/doi/10.5555/3495724.3497548}
}

@article{jain2007polynomial,
  title={A polynomial time algorithm for computing an Arrow--Debreu market equilibrium for linear utilities},
  author={Jain, Kamal},
  journal={SIAM Journal on Computing},
  volume={37},
  number={1},
  pages={303--318},
  year={2007},
  publisher={SIAM},
  doi={10.1137/S0097539705447384}
}

@inproceedings{cheung2013tatonnement,
  title={Tatonnement beyond gross substitutes? Gradient descent to the rescue},
  author={Cheung, Yun Kuen and Cole, Richard and Devanur, Nikhil},
  booktitle={Proceedings of the forty-fifth annual ACM symposium on Theory of computing},
  pages={191--200},
  year={2013}
}

@inproceedings{goktas2023tatonnement,
  author={Goktas, Denizalp and Zhao, Jiayi and Greenwald, Amy},
  title={T\^{a}tonnement in Homothetic Fisher Markets},
  year={2023},
  isbn={9798400701047},
  booktitle={Proceedings of the 24th ACM Conference on Economics and Computation},
  pages={760-781},
  numpages={22},
  series={EC'23},
  doi={10.1145/3580507.3597746}
}

@inproceedings{goktas2021consumer,
  title={A Consumer-Theoretic Characterization of Fisher Market Equilibria},
  author={Goktas, Denizalp and Viqueira, Enrique Areyan and Greenwald, Amy},
  booktitle={International Conference on Web and Internet Economics},
  pages={334--351},
  year={2021},
  organization={Springer},
  doi={10.1007/978-3-030-94676-0_19}
}

@inproceedings{nan2023fast,
  title={Fast and interpretable dynamics for fisher markets via block-coordinate updates},
  author={Nan, Tianlong and Gao, Yuan and Kroer, Christian},
  booktitle={Proceedings of the AAAI Conference on Artificial Intelligence},
  volume={37},
  number={5},
  pages={5832--5840},
  year={2023}
}

@inproceedings{cheung2018dynamics,
  title={Dynamics of distributed updating in fisher markets},
  author={Cheung, Yun Kuen and Cole, Richard and Tao, Yixin},
  booktitle={Proceedings of the 2018 ACM Conference on Economics and Computation},
  pages={351--368},
  year={2018}
}

@article{arrow1959stability,
  title={On the stability of the competitive equilibrium, II},
  author={Arrow, Kenneth J and Block, Henry D and Hurwicz, Leonid},
  journal={Econometrica: Journal of the Econometric Society},
  pages={82--109},
  year={1959},
  publisher={JSTOR}
}

@inproceedings{codenotti2005market,
  title={Market equilibrium via the excess demand function},
  author={Codenotti, Bruno and McCune, Benton and Varadarajan, Kasturi},
  booktitle={Proceedings of the thirty-seventh annual ACM symposium on Theory of computing},
  pages={74--83},
  year={2005},
  doi={10.1145/1060590.1060601}
}

@inproceedings{cheung2012tatonnement,
  title={Tatonnement in ongoing markets of complementary goods},
  author={Cheung, Yun Kuen and Cole, Richard and Rastogi, Ashish},
  booktitle={Proceedings of the 13th ACM Conference on Electronic Commerce},
  pages={337--354},
  year={2012},
  doi={10.1145/2229012.2229039}
}

@article{walras1874elements,
  title={El{\'e}ments d’{\'e}conomie pure},
  author={Walras, L{\'e}on},
  journal={Economica},
  year={1874}
}

@inproceedings{nan2025convergence,
  title={On the convergence of t{\^a}tonnement for linear fisher markets},
  author={Nan, Tianlong and Gao, Yuan and Kroer, Christian},
  booktitle={Proceedings of the AAAI Conference on Artificial Intelligence},
  volume={39},
  number={13},
  pages={14027--14035},
  year={2025},
  doi={10.1609/aaai.v39i13.33535}
}

@inproceedings{kakade2004graphical,
  title={Graphical economics},
  author={Kakade, Sham M and Kearns, Michael and Ortiz, Luis E},
  booktitle={International Conference on Computational Learning Theory},
  pages={17--32},
  year={2004},
  organization={Springer},
  doi={https://doi.org/10.1007/978-3-540-27819-1_2}
}

@inproceedings{wu2007proportional,
  title={Proportional response dynamics leads to market equilibrium},
  author={Wu, Fang and Zhang, Li},
  booktitle={Proceedings of the thirty-ninth annual ACM symposium on Theory of computing},
  pages={354--363},
  year={2007},
  doi={https://doi.org/10.1145/1250790.1250844}
}

@inproceedings{andrade2021graphical,
  title={Graphical economies with resale},
  author={Andrade, Gabriel and Frongillo, Rafael and Srinivasan, Sharadha and Gorokhovsky, Elliot},
  booktitle={Proceedings of the 22nd ACM Conference on Economics and Computation},
  pages={71--90},
  year={2021},
  doi={https://doi.org/10.1145/3465456.3467628}
}

@inproceedings{zhao2023fisher,
  title={Fisher markets with social influence},
  author={Zhao, Jiayi and Goktas, Denizalp and Greenwald, Amy},
  booktitle={Proceedings of the AAAI Conference on Artificial Intelligence},
  volume={37},
  number={5},
  pages={5900--5909},
  year={2023},
  doi={https://doi.org/10.1609/aaai.v37i5.25731}
}

@article{tao2025fisher,
  title={Fisher Meets Lindahl: A Unified Duality Framework for Market Equilibrium},
  author={Tao, Yixin and Zheng, Weiqiang},
  journal={arXiv preprint arXiv:2511.04572},
  year={2025}
}

@inproceedings{DBLP:conf/aldt/MatteiW13,
  author       = {Nicholas Mattei and
                  Toby Walsh},
  editor       = {Patrice Perny and
                  Marc Pirlot and
                  Alexis Tsouki{\`{a}}s},
  title        = {PrefLib: {A} Library for Preferences http://www.preflib.org},
  booktitle    = {Algorithmic Decision Theory - Third International Conference, {ADT}
                  2013, Bruxelles, Belgium, November 12-14, 2013, Proceedings},
  series       = {Lecture Notes in Computer Science},
  pages        = {259--270},
  publisher    = {Springer},
  year         = {2013},
  url          = {https://doi.org/10.1007/978-3-642-41575-3\_20},
  doi          = {10.1007/978-3-642-41575-3\_20},
  bibsource    = {dblp computer science bibliography, https://dblp.org}
}

@article{zhang2025second,
  title={The Second-Order T$\backslash$\^{} atonnement: Decentralized Interior-Point Methods for Market Equilibrium},
  author={Zhang, Chuwen and He, Chang and Jiang, Bo and Ye, Yinyu},
  journal={arXiv preprint arXiv:2508.04822},
  year={2025}
}

@article{chen2025accelerated,
  title={Accelerated Price Adjustment for Fisher Markets with Exact Recovery of Competitive Equilibrium},
  author={Chen, He and Jiang, Chonghe and So, Anthony Man-Cho},
  journal={arXiv preprint arXiv:2510.07759},
  year={2025}
}

@article{cheung2025proportional,
  title={Proportional Response Dynamics in Gross Substitutes Markets},
  author={Cheung, Yun Kuen and Cole, Richard and Tao, Yixin},
  journal={arXiv preprint arXiv:2506.02852},
  year={2025}
}

@inproceedings{DevanurK08,
  author    = {Nikhil R. Devanur and Ravi Kannan},
  title     = {Market Equilibria in Polynomial Time for Fixed Number of
               Goods or Agents},
  booktitle = {49th Annual {IEEE} Symposium on Foundations of Computer
               Science ({FOCS})},
  pages     = {45--53},
  year      = {2008},
  publisher = {{IEEE} Computer Society},
  doi       = {10.1109/FOCS.2008.30}
}

@inproceedings{JainVY05,
  author    = {Kamal Jain and Vijay V. Vazirani and Yinyu Ye},
  title     = {Market Equilibria for Homothetic, Quasi-Concave Utilities
               and Economies of Scale in Production},
  booktitle = {Proceedings of the Sixteenth Annual {ACM-SIAM} Symposium on
               Discrete Algorithms ({SODA})},
  pages     = {63--71},
  year      = {2005},
  publisher = {{SIAM}}
}

\newpage

\appendix

\begin{center}
    \Large \textbf{Appendix}
\end{center}

\section{Additional related work}

\textbf{Computation of CE in Fisher markets.} A series of foundational works established convex programming and linear complementarity problem (LCP) characterizations of the CE set across various economic models~\cite{eisenberg1959consensus,nenakov1983one,eaves1975finite}. Building on these formulations, \cite{Orlin10} obtained the first strongly polynomial-time algorithm in this setting. Subsequent research has produced a diverse body of algorithms for computing CE in broader market models---such as exchange and Arrow-Debreu markets—going well beyond the linear Fisher case~\cite{gao2020first,kakade2004graphical,andrade2021graphical,zhao2023fisher}. 
These include interior point methods~\cite{jain2007polynomial,ye2008path},
as well as combinatorial approaches~\cite{duan2015combinatorial}.
For \emph{CES markets} specifically, polynomial-time computation has been
established under various restrictions, including weak gross substitutes
regimes ($\sigma > 1$)~\cite{codenotti2005market,cheung2019tatonnement}, fixed numbers of agents or goods~\cite{DevanurK08}, and more general homothetic utilities via convex programming~\cite{JainVY05}.
In the chores market, a separate line of work has yielded polynomial-time algorithms achieving arbitrarily good approximations of CE~\cite{BoodaghiansCM22,ChaudhuryGMM22,chaudhury2024competitive}. 
More recently, \cite{BranzeiS24} gave a polynomial-time algorithm for exactly computing a CE in linear chores Fisher markets whenever the number of agents or chores is constant. 

\section{\texorpdfstring{Proofs of~\cref{sec:pre}}{Proofs of sec:pre}}
\label{app:sec:chores-FM}

\begin{proposition}
    Given a chores Fisher market instance $\mathcal{M} = (n, m, (d_i)_{i=1}^n, (s_j)_{j=1}^m, (B_i)_{i=1}^n)$, 
    there is another instance $\tilde{\mathcal{M}}$ that satisfies~\cref{assump:wlog}, and there is a bijection between the sets of CE of $\mathcal{M}$ and $\tilde{\mathcal{M}}$.
    \label{prop:wlog}
\end{proposition}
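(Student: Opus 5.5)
The plan is to construct $\tilde{\mathcal{M}}$ by three successive rescalings: of the chore units, of money, and of each agent's disutility. We then check that the induced map on price–allocation pairs sends CE to CE, with an explicit inverse.

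\emph{Construction.} Set
\[
\tilde s_j = 1,\qquad \tilde B_i = B_i/\textstyle\sum_{i'} B_{i'}, \qquad \tilde d_i(\tilde\bfx_i) = d_i(\bfs\odot\tilde\bfx_i)/d_i(\bfs).
\]
Here $d_i(\bfs)>0$ because $\bfs\neq 0$. The function $\tilde d_i$ is continuous and convex, since it is a convex function of a linear map scaled by a positive constant. It is $1$-homogeneous and positive on nonzero vectors, because $\bfs>0$ coordinatewise. Strict monotonicity (\cref{assump:strictly-increasing}) is also preserved. Finally $\tilde d_i(\mathbf{1}_m)=1$ by construction, so all three parts of \cref{assump:wlog} hold.

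\emph{The map.} Let $Z=\sum_i B_i$. Define
\[
\Phi(\bfp,\bfx) = (\tilde\bfp,\tilde\bfx), \qquad \tilde p_j = p_j s_j/Z, \qquad \tilde x_{ij} = x_{ij}/s_j .
\]
This is clearly a bijection of $\RR^m_+\times\RR^{n\times m}_+$, with inverse $p_j = Z\tilde p_j/s_j$ and $x_{ij} = s_j\tilde x_{ij}$.

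\emph{Demands correspond.} We need $\bfx_i\in X_i(\bfp)$ if and only if $\tilde\bfx_i\in\tilde X_i(\tilde\bfp)$. First, $\inp{\tilde\bfp}{\tilde\bfx_i}=\inp{\bfp}{\bfx_i}/Z$, so the constraint $\inp{\bfp}{\bfx'_i}\ge B_i$ holds exactly when $\inp{\tilde\bfp}{\tilde\bfx'_i}\ge\tilde B_i$. Second, $\tilde d_i(\tilde\bfx'_i)=d_i(\bfx'_i)/d_i(\bfs)$ is a positive multiple of the original objective. Since the change of variables $\bfx'_i\leftrightarrow\tilde\bfx'_i$ is a bijection of $\RR^m_+$, the argmin sets correspond.

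\emph{Market clearing corresponds.} We have $\sum_i\tilde x_{ij}-1=(\sum_i x_{ij}-s_j)/s_j$. Hence $\sum_i\tilde x_{ij}\ge 1$ holds exactly when $\sum_i x_{ij}\ge s_j$. Also
\[
\tilde p_j\Big(\sum_i\tilde x_{ij}-1\Big) = \frac{p_j}{Z}\Big(\sum_i x_{ij}-s_j\Big),
\]
so the complementary slackness conditions match as well. Therefore $\Phi$ maps CE of $\mathcal M$ onto CE of $\tilde{\mathcal M}$, and $\Phi^{-1}$ maps back.

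There is no real obstacle here. The only care point is making sure that the disutility rescaling preserves the standing hypotheses (CCH, positivity, strict monotonicity) and that the price scaling compensates the unit change so that the budget constraints map exactly.
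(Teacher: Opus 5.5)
Your proposal is correct and matches the paper's proof essentially line for line: it uses the same rescaled instance $(\tilde s_j,\tilde B_i,\tilde d_i)$, the same map $(\bfp,\bfx)\mapsto(\bfs\odot\bfp/\sum_i B_i,\ \bfs^{-1}\odot\bfx)$, and the same check that demand sets and both market-clearing conditions correspond. Your explicit computation $\inp{\tilde\bfp}{\tilde\bfx_i}=\inp{\bfp}{\bfx_i}/Z$ and your remark that $\tilde d_i$ inherits strict monotonicity are slightly more careful than the paper's ``one can verify,'' but they do not change the argument.
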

\begin{proof}[Proof of~\cref{prop:wlog}]
Given a chores Fisher market instance $\mathcal{M} = (n, m, (d_i)_{i=1}^n, (s_j)_{j=1}^m, (B_i)_{i=1}^n)$, 
there is another instance $\tilde{\mathcal{M}} = (n, m, (\tilde{d}_i)_{i=1}^n, (\tilde{s}_j)_{j=1}^m, (\tilde{B}_i)_{i=1}^n)$ where 
\begin{equation*}
    \tilde{s}_j := 1, \qquad \tilde{B}_i := \frac{B_i}{\sum_{i'=1}^n B_{i'}}, \qquad \tilde{d}_i(\bfx_i) := \frac{d_i(\bfs \odot \bfx_i)}{d_i(\mathbf{s})}.
\end{equation*}
One can verify that each $\tilde{d}_i$ is CCH, and $\tilde{d}_i(\bfx_i) = 0$ if and only if $\bfx_i = \mathbf{0}_m$.
Moreover, $\tilde{\mathcal{M}}$ satisfies~\cref{assump:wlog}.
Next, we claim that 
$(\bfp, \bfx)$ is a CE of $\mathcal{M}$ if and only if $(\tilde{\bfp}, \tilde{\bfx})$ is a CE of $\tilde{\mathcal{M}}$ where 
\begin{equation*}
    (\tilde{\bfp}, \tilde{\bfx}) := \left(\frac{\bfs \odot \bfp}{\sum_{i = 1}^n B_i},  \bfs^{-1} \odot \bfx \right), \quad \bfs^{-1} = (s_1^{-1}, \ldots, s_m^{-1}), 
\end{equation*}
which is true because of the following correspondences.
\begin{align*}
    \tilde{\bfx}_i \in \underset{{\bfx'_i \in \mathbb{R}^m_+}}{\textnormal{argmin}} \left\{  \tilde{d}_i(\bfx'_i) \,\big\vert\, \inp*{\tilde{\bfp}}{\bfx'_i} \geq \tilde{B}_i \right\} \, &\Leftrightarrow \, \bfs^{-1} \odot \bfx \in \underset{{\bfx'_i \in \mathbb{R}^m_+}}{\textnormal{argmin}} \left\{  d_i(\bfs \odot \bfx'_i) \mid \inp*{\bfs \odot \bfp}{\bfx'_i} \geq B_i \right\} \\ 
    &\Leftrightarrow \, \bfs^{-1} \odot \bfx \in \underset{{\bfs^{-1} \odot \bfx''_i \in \mathbb{R}^m_+}}{\textnormal{argmin}} \left\{  d_i(\bfx''_i) \mid \inp*{\bfp}{\bfx''_i} \geq B_i \right\} \\ 
    &\Leftrightarrow \, \bfx \in \underset{\bfx''_i \in \mathbb{R}^m_+}{\textnormal{argmin}} \left\{  d_i(\bfx''_i) \mid \inp*{\bfp}{\bfx''_i} \geq B_i \right\}.
\end{align*}
For any $j \in [m]$, 
\begin{equation*}
    \sum_{i=1}^n \tilde{x}_{ij} \geq 1 \Leftrightarrow \sum_{i=1}^n x_{ij} \geq s_j \quad 
    \tilde{p}_j \left(\sum_{i=1}^n \tilde{x}_{ij} - 1 \right) \Leftrightarrow \frac{p_j}{\sum_{i=1}^n B_i} \left(\sum_{i=1}^n x_{ij} - s_j\right).
    \qedhere
\end{equation*}
\end{proof}

\section{\texorpdfstring{Proofs in~\cref{sec:multiplicative-tatonnement-convergence}}{Proofs in sec:multiplicative-tatonnement-convergence}}
\label{app:sec:convergence}

\begin{repeatproposition}{\ref{prop:sum-p-zeta-equal-0}}[Walras's law]
    If $\bfp \in \Delta_B$, then we have 
    $\inp{\bfp}{\bfy} = 0$ for all $\bfy \in Y(\bfp)$. 
\end{repeatproposition}
\begin{proof}[Proof of~\cref{prop:sum-p-zeta-equal-0}]
    Let $\bfy \in Y(\bfp)$ be any excess supply vector, and
    let $\bfb \in \mathbb{R}^{n \times m}_{+}$ be the earnings for each agent $i$ and chore $j$ corresponding to $\bfy$. 
    That is, by definition $y_j = 1 -\frac{\sum_{i=1}^n b_{ij}}{p_j}$ for all $j \in [m]$.
    By agent's optimality, we have $\sum_{j=1}^m b_{ij} = B_i$ for all $i \in [n]$. 
    Then, since $\bfp \in \Delta_B$, we have 
    \begin{equation*}
        \inp{\bfp}{\bfy} = 
        \sum_{j = 1}^m p_j \left( 1 - \frac{\sum_{i = 1}^n b_{ij}}{p_j} \right) = \sum_{j = 1}^m p_j - \sum_{i = 1}^n \sum_{j = 1}^m b_{ij} = \sum_{j = 1}^m p_j - \sum_{i = 1}^n B_i = 0.
        \qedhere
    \end{equation*} 
\end{proof}

We repeat the definition of $\ell_0$ (for all disutility functions satisfying~\cref{assump:strictly-increasing}) here: 
\begin{equation*}
    \ell_0 = \frac{1}{3m} \min_{i} \left\{ \underset{j, j' \in [m]}{\min_{\tilde{\bfx}_i \geq 0, \bfd'_i > 0, }} \left\{ \frac{d'_{ij}}{d'_{ij'}}\ \left| \ {j, j' \in [m]},  \tilde{x}_{ij} \geq  \frac{1}{2 n m}, \tilde{x}_{ij'} \leq 2m B_i, \bfd'_i \in \partial d_i(\tilde{\bfx}_i) \right. \right\} \right\} > 0.
\end{equation*}

For CES disutility function with $\rho \in (1, \infty)$, i.e., $d_i(\bfx_i) = (\sum_{j=1}^m (d_{ij} x_{ij})^\rho)^{1/\rho}$, then one can show that 
\begin{equation}
    \ell_0 = \frac{1}{3m} \min_{i \in [n]} \left\{ \min_{j, j' \in [m]} \left(\frac{d_{ij}}{d_{ij'}}\right)^\rho \left( \frac{\norm*{\bfB}_1}{4 n m^2 B_i} \right)^{\rho - 1} \right\} > 0. 
    \label{eq:nui defn CES}
\end{equation}
The quantity $\ell_0$ is useful in the proof as the excess supply is positive whenever a price is lower than $\ell_0$, which is shown by~\cite[Lemma 3]{chaudhury2025t}.
\begin{lemma}
    In a chores Fisher market with CCH disutilities satisfying~\cref{assump:strictly-increasing}, given any $\bfp \in \mathbb{R}^m_+$ such that $\max_{j \in [m]} p_{j} \geq \frac{\norm*{\bfB}_1}{2m}$, for any chore $j$, if $p_j \leq \ell_0$ then 
    $y_j > 1 - \frac{1}{2m} \geq \frac{1}{2}$ for all $\bfy \in Y(\bfy)$.
    \label{lem:excess-demand-barrier}
\end{lemma}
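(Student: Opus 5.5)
We argue by contradiction on the demand side: we show that any agent who takes a positive amount of chore $j$ would strictly prefer to shift that work to a high-priced chore. The hypothesis $\max_{j'} p_{j'} \geq \norm{\bfB}_1/(2m)$ guarantees such a chore $j'$ exists.

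Fix $\bfp$, a chore $j$ with $p_j \le \ell_0$, and any optimal bundles $\bfx_i \in X_i(\bfp)$. Let $j'$ attain the maximum price, so $p_{j'} \ge \norm{\bfB}_1/(2m) = 1/(2m)$.

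\emph{Step 1 (KKT conditions).} Each $X_i(\bfp)$ is a convex program with a binding constraint $\inp{\bfp}{\bfx_i} = B_i$; the constraint binds because $d_i$ is strictly increasing and 1-homogeneous. Its KKT conditions give a multiplier $\lambda_i \ge 0$ and a subgradient $\bfd'_i \in \partial d_i(\bfx_i)$ with
\begin{equation*}
d'_{ik} \ge \lambda_i p_k \text{ for all } k, \qquad \text{with equality whenever } x_{ik} > 0.
\end{equation*}
By 1-homogeneity and Euler's identity, $d_i(\bfx_i) = \inp{\bfd'_i}{\bfx_i} = \lambda_i B_i$, so $\lambda_i > 0$.

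\emph{Step 2 (bounds on the bundle).} Suppose $x_{ij} > 0$, so that $d'_{ij} = \lambda_i p_j$ and $d'_{ij'} \ge \lambda_i p_{j'}$. Dividing gives
\begin{equation*}
\frac{d'_{ij}}{d'_{ij'}} \le \frac{p_j}{p_{j'}} \le 2m\ell_0 < 3m\ell_0 .
\end{equation*}
To contradict the definition of $\ell_0$, we need $x_{ij} \ge \frac{1}{2nm}$ and $x_{ij'} \le 2mB_i$. The second holds automatically: $p_{j'} x_{ij'} \le B_i$ and $p_{j'} \ge 1/(2m)$.

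\emph{Step 3 (case split).} If $x_{ij} \ge \frac{1}{2nm}$, the pair $(\bfx_i, \bfd'_i)$ is admissible in the minimization defining $\ell_0$. That definition then forces $d'_{ij}/d'_{ij'} \ge 3m\ell_0$, contradicting Step 2. Hence every agent satisfies $x_{ij} < \frac{1}{2nm}$. Summing over the $n$ agents,
\begin{equation*}
\sum_i x_{ij} < \frac{1}{2m}, \qquad \text{so} \qquad y_j > 1 - \frac{1}{2m} \ge \frac{1}{2}.
\end{equation*}
This holds for every selection from $Y(\bfp)$.

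\emph{Main obstacle.} The delicate point is subgradient compatibility: the $\bfd'_i$ produced by the KKT conditions must be the subgradient at $\tilde{\bfx}_i = \bfx_i$ used in the definition of $\ell_0$. Since the definition quantifies over all elements of $\partial d_i(\tilde{\bfx}_i)$, any KKT subgradient qualifies, so this works. Justifying the KKT conditions also needs Slater's condition, which holds because $\bfp \neq 0$.
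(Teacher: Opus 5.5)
Your proof is correct and follows the intended argument. The paper does not reprove this lemma; it imports it as Lemma~3 of \cite{chaudhury2025t}, and $\ell_0$ is defined precisely so that two facts close the argument. First, the KKT comparison $d'_{ij}/d'_{ij'} \le p_j/p_{j'} \le 2m\ell_0 < 3m\ell_0$ against a top-priced chore $j'$, whose allocation satisfies $x_{ij'} \le B_i/p_{j'} \le 2mB_i$ automatically, rules out $x_{ij} \ge \frac{1}{2nm}$ for every agent. Second, summing over agents then gives $y_j > 1-\frac{1}{2m}$.

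Your subgradient-compatibility remark is right for the main-text definition of $\ell_0$. The appendix restatement additionally requires $\bfd'_i > 0$, but your KKT subgradient still qualifies whenever $\bfp > 0$, since $d'_{ik} \ge \lambda_i p_k$. That is the only regime in which the paper applies the lemma, because multiplicative updates keep prices strictly positive.
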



\begin{repeatlemma}{\ref{lem:eta-y-bound}}[Boundedness of excess supply]
    In a CCH chores Fisher market that satisfies~\cref{assump:strictly-increasing}, denote $d_{\min}^{(i)} := \min_{j \in [m]} d_i(\mathbf{e}_j) > 0$ where $\mathbf{e}_j$ is the basis vector with a $1$ in coordinate $j$ and $0$ elsewhere. 
    Then, for any $\bfp \in \Delta_\bfB$, 
    $1 - (d_{\min})^{-1} \leq y_{j} \leq 1$ for any $j \in [m]$ and $\bfy \in Y(\bfp)$, where $d_{\min} := \min_i d_{\min}^{(i)}$.
    If $d_i(\cdot)$ is a convex CES disutility function for each agent $i$, then $1 - (\min_{i, j} d_{ij})^{-1} \leq y_{j} \leq 1$ for any $j \in [m]$ and $\bfy \in Y(\bfp)$.
\end{repeatlemma}
\begin{proof}[Proof of~\cref{lem:eta-y-bound}]
    \label{proof:lem:eta-y-bound}
    For each agent $i \in [n]$ and any 
    \begin{equation}
        \bfx^*_{i} \in \argmin_{\bfx \geq 0} \big\{ d_i(\bfx_i) \, \big\vert \, \sum\nolimits_{j = 1}^m p_j x_{ij} \geq B_i \big\}, 
        \label{eq:disutility-minimization}
    \end{equation}
    it follows that 
    \begin{equation}
        d_i(\bfx^*_i) \leq d_i\left(\frac{B_i}{\norm{\bfp}_1}\cdot \mathbf{1}_m\right) = d_i\left( B_i\cdot \mathbf{1}_m \right) = B_i, 
        \label{eq:bounded-disutility-CE}
    \end{equation}
    where the first inequality holds because $({B_i}/{\norm{\bfp}_1}) \cdot \mathbf{1}_m$ is a feasible solution to the minimization problem in~\cref{eq:disutility-minimization}, and 
    the inequalities follows by $\bfp \in \Delta_B$ and $\norm{\bfB}_1 = 1, d_i(\mathbf{1}_m) = 1$ by~\cref{assump:wlog}.     

    Since $d_i$ is increasing in each coordinate, 
    by~\cref{eq:bounded-disutility-CE}, 
    we have 
    \[
       x_{ij}^* d_{\min}^{(i)} \leq x_{ij}^* d_i(\mathbf{e}_j) =
       d_i(x_{ij}^* \mathbf{e}_j) \leq d_i(\mathbf{x}_i^*) \leq B_i  \ \forall\, j \in [m], 
    \]
    where the first equality is because $d_i$ is $1$-homogeneous, and the second inequality is because of~\cref{assump:strictly-increasing}.
    Thus, we have $x^*_{ij} \leq {B_i}/{d_{\min}^{(i)}}$.
    
    Therefore, for any $\bfy \in Y(\bfp)$, we have $1 \geq y_j = 1 - \sum_{i = 1}^n x^*_{ij} \geq 1 - \sum_{i = 1}^n {B_i}/{d_{\min}^{(i)}}$.
    Applying a uniform bound for $d^{(i)}_{\min}$ over all $i \in [n]$, we have 
    \begin{equation*}
        1 - \sum_{i = 1}^n \frac{B_i}{d_{\min}^{(i)}} \geq 1 - \frac{1}{\min_i d_{\min}^{(i)}} \sum_{i = 1}^n B_i = 1 - \frac{1}{d_{\min}}.
    \end{equation*}

    For CES disutility function with $\rho \in (1, \infty)$, i.e., $d_i(\bfx_i) = (\sum_{j=1}^m (d_{ij} x_{ij})^\rho)^{1/\rho}$, 
    \begin{equation}
        d^{(i)}_{\min} = \min_j d_{ij} \textnormal{ and } d_{\min} = \min_{i, j} d_{ij}.
        \label{eq:CES-excess-supply-boundness}
    \end{equation}
    This completes the proof.
\end{proof}


Leveraging this sharper boundedness of the excess demand, we can show that, the desired behaviors of the discrete-time multiplicative t\^atonnement holds with a larger stepsize than a na\"ive bound.
Moreover, this admissible stepsize will not blow up when $\rho \rightarrow \infty$.

\begin{repeatlemma}{\ref{lem:discrete-time-multiplicative-tatonnement-properties}}[Strict positiveness of prices]
    Let $\beta := 1 + (d_{\min})^{-1} > 0$.
    Let $\{ \bfp^k \}_{k \geq 0}$ be a sequence of iterates generated by~\eqref{alg:mul-ttm} with the initial point $\bfp^0 \in \textnormal{relint}\left( \Delta_B \right)$ and $\eta^k \leq \frac{1}{2\beta}$ for all $k$. 
    Then, we have 
    $(i)$ $\bfp^k \in \Delta_B$ for all $k \geq 0$, 
    $(ii)$ $p^{k+1}_j > p^k_j + \frac{1}{2}\eta^k$ if $p^k_j \leq \ell_0$ for any $k \geq 0$ and $j \in [m]$, 
    $(iii)$ if additionally $\sum_{k=0}^\infty \eta^k = \infty$, then there exists a finite index $k_0 \geq 0$ such that $p^k_j \geq \frac{\ell_0}{2}\; \forall\, j \in [m]$ for all $k \geq k_0$.
\end{repeatlemma}
\begin{proof}[Proof of~\cref{lem:discrete-time-multiplicative-tatonnement-properties}]
    The above lemma follows from the following three facts. 
    Informally, these points create a discrete-time threshold that prevents any price from approaching zero, provided the step size is chosen small enough.
    \begin{enumerate}
        \item For any $\bfp^k \in \Delta_B$, $\bfp^{k+1} \in \Delta_B$; 
        \item For any $\bfp^k \in \Delta_B$ and each $j \in [m]$, if $p^k_j \leq \ell_0$ then $p^{k+1}_j > p^k_j + \frac{1}{2} \eta^k$; 
        \item For any $\bfp^k \in \RR^m_+$ and each $j \in [m]$, if $p^k_j > \ell_0$, then $p^{k + 1}_j > \frac{\ell_0}{2}$. 
    \end{enumerate}  
    The first fact is true because of~\cref{prop:sum-p-zeta-equal-0}.
    The second fact follows from~\cref{lem:excess-demand-barrier}. 
    The third fact is true because of the following: for any $\bfy^k \in Y(\bfp^k)$, we have $| y^k_j | \leq \beta$. 
    Thus, $p^{k + 1}_j = p^k_j(1 + \eta^k y^k_j) > p^k_j (1 - \eta^k \beta) \geq \frac{1}{2} p^k_j > \frac{\ell_0}{2}$. 
    
    Then, we prove~\cref{lem:discrete-time-multiplicative-tatonnement-properties}. 
    The first statement in~\cref{lem:discrete-time-multiplicative-tatonnement-properties} follows from the first fact by induction, and second statement was proved as the second fact above.
    If there exists a set of chores $J$ such that $p^0_j \in [0, \ell_0)$ for any $j \in J$. 
    Because $p^{k + 1}_j > p^k_j + \frac{1}{2}\eta^k$ if $p^k_j < \ell_0$, $p^k \in \Delta_B$, and $\sum_{k=0}^\infty \eta^k = \infty$, 
    there is a time point $\kappa_j$ for each $j \in J$ such that it exceeds the barrier $\ell_0$ for the first time, i.e., $\kappa_j := \min\{ k \mid p^k_j > \ell_0 \}$.
    The above facts then imply that $p^k_j \geq \frac{\ell_0}{2}$ for all $k \geq \kappa_j$. 
    Hence, letting $k_0 := \max_{j \in J} \kappa_j$, we show the third statement in~\cref{lem:discrete-time-multiplicative-tatonnement-properties} is correct.
\end{proof}

We restate~\cref{prop:ding-proposition} here for reference.
\begin{repeatproposition}{\ref{prop:ding-proposition}}
    Let $\{ \bfp^k \}_{k \geq 0}$ be a sequence of iterates generated by Bregman subgradient update described in~\cref{eq:bregman-subgradient-update-1,eq:bregman-subgradient-update-2}, and suppose that the process satisfies that 
    $(1)$ the sequences $\{ \bfp^k \}_{k \geq 0}$, $\{ \log{(\bfp^k)} \}_{k \geq 0}$, and $\{ \bfy^k \}_{k \geq 0}$ are uniformly bounded almost surely,
    $(2)$ the stepsizes satisfies $\sum_{k=0}^\infty \eta^k = \infty$, and either $\eta^k = o(\frac{1}{\log{k}})$ or $\sum_{k=0}^\infty (\eta^k)^2 < \infty$, 
    $(3)$ the perturbation error satisfies $\lim_{k \rightarrow \infty} \nu^k \rightarrow 0$, 
    $(4)$ there is a potential function $f$ such that $-\bfy^k \in \partial f(\bfp^k)$\footnote{$\partial f(\bfp^k)$ denotes Clarke differential of function $f$ at $\bfp^k$.}, and $f$ is lower bounded, 
    $(5)$ the critical value set $\{ f(\bfp) \mid \mathbf{0}_m \in \partial f(\bfp) \}$ has empty interior in $\mathbb{R}$. 
    Then, almost surely, any cluster point of $\{ \bfp^k \}_{k \geq 0}$ is a critical point and the function values $\{ f(\bfp^k) \}_{k \geq 0}$ converge.
\end{repeatproposition}

\begin{repeattheorem}{\ref{thm:asymptotic-convergence}}[Asymptotic convergence]
    Let $\{ \eta^k \}_{k \geq 0}$ be a sequence of stepsizes satisfying (i) $0 < \eta^k \leq 1/(2\beta)$, (ii) $\sum_{k = 0}^\infty \eta^k = \infty$, and (iii) $\eta^k = o(\frac{1}{\log{k}})$ or $\sum_{k=0}^\infty (\eta^k)^2 < \infty$.
    Let $\{ \bfp^k \}_{k \geq 0}$ be a sequence of iterates generated by~\eqref{alg:mul-ttm} with $\{ \eta^k \}_{k \geq 0}$ and any initial point $\bfp^0 \in \textnormal{relint}(\Delta_B)$. 
    Then, every limit point of $\{ \bfp^k \}_{k\geq 0}$ is a CE of the chores Fisher market.
\end{repeattheorem}
\begin{proof}
    By~\cref{lem:excess-demand-barrier}, we show that the sequence of iterates $\left\{ \bfp^k \right\}_{k \geq 0}$ lie in $\textnormal{relint}(\Delta_B)$ 
    and have a positive lower bound, 
    if the stepsizes are properly upper bounded. 

    Combining~\cref{lem:discrete-time-multiplicative-tatonnement-properties} with $\bfp^0 \in \mathbb{R}^m_{++}$, we show a price lower bound for~\eqref{alg:mul-ttm} dynamics under stated small stepsizes. 
    Since $\bfp^k \in \textnormal{relint}(\Delta_B)$ holds, there is a trivial upper bound $\norm{\bfB}_1 = 1$. 
    It follows that the prices and the excess demands are bounded, and therefore \cref{prop:ding-proposition}.(1) is satisfied. Note that, we need the concrete lower bound to make sure the iterates $\{ \nabla h_{\textnormal{KL}}(\bfp^k) \}_{k \geq 1}$ is uniformly bounded almost surely. 
    \cref{prop:ding-proposition}.(2) are true by the stepsize selection rule.
    Note that 
    \begin{align*}
        \norm*{\frac{\nabla h_{\textnormal{KL}}(\bfp^{k + 1}) - \nabla h_{\textnormal{KL}}(\bfp^k)}{\eta^k} - \bfy^k}_2 
        =& \norm*{\frac{\log{(1 + \eta^k \bfy^k)}}{\eta^k} - \bfy^k}_2 \\ 
        =& \sum_{j=1}^m \frac{1}{(\eta^k)^2}\left( \log{(1 + \eta^k y^k_j)} - \eta^k y^k_j \right)^2 \\ 
        \leq& \sum_{j = 1}^m \frac{1}{(\eta^k)^2}\left( \eta^k y^k_j \right)^4 \\ 
        =& \sum_{j = 1}^m (\eta^k)^2 \left( y^k_j \right)^4 \leq \beta^4 m (\eta^k)^2 \rightarrow 0 \ \textnormal{ as } k \rightarrow \infty, 
    \end{align*}
    where the first inequality holds because $|\log(1 + x) - x| \leq x^2$ for all $|x| \leq \frac{1}{2}$ and $| \eta^k y^k_j | \leq \frac{1}{2}$, and the second inequality holds by $| y^k_j | \leq \beta$.
    Thus, \cref{prop:ding-proposition}.(3) is satisfied.

    To see \cref{prop:ding-proposition}.(4), by~\cite[Lemma 6]{chaudhury2025t}, the chores potential function $f$ is a valid potential function satisfies $\partial f(\bfp) = - Y(\bfp)$, and has a lower bound over $\bfp \in \Delta_B$.
    The second one is true because stationary points of $f$ are disconnected and the number of critical points is finite, and hence the critical value set cannot include an open interval. 

    Therefore, the theorem holds by~\cref{prop:ding-proposition}.
\end{proof}

\section{\texorpdfstring{Proofs of~\cref{sec:convergence-rate}}{Proofs of sec:convergence-rate}}
\label{app:sec:convergence-rate}

\begin{repeatlemma}{\ref{lem:relative-smoothness-KL}}[Relative smoothness of $f$ w.r.t. $h_{\mathrm{KL}}$]
For any two price vectors $\bfp, \bfq \in \Delta_B$, we have 
\begin{equation}
    f(\bfp) \leq f(\bfq) + \langle \nabla f(\bfq), \bfp - \bfq \rangle + (\sigma - 1)m^{\sigma - 1}(\min_{i,j} d_{ij})^{-\sigma} \KL(\bfp, \bfq).
\end{equation}
\end{repeatlemma}
\begin{proof}
For any $\bfp \in \Delta_{B}$, for any $i \in [n]$, denote $w_{ij}(\bfp) := \frac{d_{ij}^{-\sigma} p_j^{\sigma}}{\sum_{\ell=1}^m d_{i\ell}^{-\sigma} p_\ell^{\sigma}}$.
By calculation, 
\begin{align*}
    \nabla^2 f(\bfp) 
    =& \sum_{i = 1}^n B_i \diag(\bfp)^{-1} \left( (\sigma - 1) \diag(\mathbf{w}_i(\bfp)) - \sigma \mathbf{w}_i(\bfp) \mathbf{w}_i(\bfp)^\top  \right) \diag(\bfp)^{-1} \\ 
    \preceq& \sum_{i = 1}^n B_i \diag(\bfp)^{-1} \left( (\sigma - 1) \diag(\mathbf{w}_i(\bfp)) \right) \diag(\bfp)^{-1}, 
\end{align*}
since the second term is rank-one and thus positive semidefinite.
To show 
\begin{equation}
    \nabla^2 f(\bfp) \preceq L \nabla^2 h_{\mathrm{KL}}(p) = L \diag(\bfp)^{-1}, 
    \label{eq:twice-differentiable-L}
\end{equation}
it suffices to prove that 
\begin{equation*}
    (\sigma - 1) p_\ell \sum_{i=1}^n B_i\,\frac{w_{i\ell}}{p_\ell^2} \leq L \quad \forall\, \ell \in [m].
\end{equation*}

By Jensen's inequality and the convexity of $x^\sigma$, we have 
\begin{equation*}
    (\frac{1}{m}\sum_{j=1}^m p_j)^\sigma \leq 
    \frac{1}{m} \sum_{j=1}^m p_j^\sigma, 
\end{equation*}
hence 
\begin{equation*}
    \sum_{j=1}^m p_j^\sigma \geq \frac{1}{m^{\sigma - 1}} \sum_{j=1}^m p_j = \frac{1}{m^{\sigma - 1}} \sum_{i=1}^n B_i = \frac{1}{m^{\sigma - 1}}.
\end{equation*}
Then, we have  
\begin{equation}
    \sum_{\ell=1}^m d_{i\ell}^{-\sigma} p_\ell^{\sigma} \geq (\max_j d_{ij})^{-\sigma} \sum_{j = 1}^m p_j^\sigma \geq \frac{1}{m^{\sigma - 1}(\max_j d_{ij})^{\sigma}}.
    \label{eq:sum-lower-bound}
\end{equation}
Since 
\begin{equation*}
    p_j \sum_{i=1}^n B_i\frac{w_{ij}}{p_j^2} = \sum_{i=1}^n B_i\frac{d_{ij}^{-\sigma} p_j^{\sigma - 1}}{\sum_{\ell=1}^m d_{i\ell}^{-\sigma} p_\ell^{\sigma}} \leq  \sum_{i=1}^n B_i m^{\sigma - 1}\left(\frac{\max_j d_{ij}}{\min_j d_{ij}}\right)^{\sigma}, 
\end{equation*}
where the last inequality holds because \cref{eq:sum-lower-bound}, $d_{i\ell} \leq \min_j d_{ij} \ \forall\, \ell$, and $p_j^{\sigma - 1} \leq \norm{\bfp}_1^{\sigma - 1} = \norm{\bfB}_1^{\sigma -1} = 1$.
Thus, we have that \cref{eq:twice-differentiable-L} holds with $L = (\sigma - 1) \sum_{i=1}^n B_i m^{\sigma - 1}(\frac{\max_j d_{ij}}{\min_j d_{ij}})^{\sigma} \leq (\sigma - 1)m^{\sigma - 1}\max_i(\frac{\max_j d_{ij}}{\min_j d_{ij}})^{\sigma}\sum_{i = 1}^n B_i = (\sigma - 1)m^{\sigma - 1}\max_i(\frac{\max_j d_{ij}}{\min_j d_{ij}})^{\sigma}$.

Therefore, we have 
\begin{equation}
    f(\bfp) \leq f(\bfq) + \langle \nabla f(\bfq), \bfp - \bfq \rangle + (\sigma - 1)m^{\sigma - 1}\max_i\left(\frac{\max_j d_{ij}}{\min_j d_{ij}}\right)^{\sigma} \KL(\bfp, \bfq).
    \label{eq:tight-smoothness-KL}
\end{equation}
because $\nabla^2 f(\bfp) \preceq L \nabla^2 h_{\mathrm{KL}}(p)$ is equivalent to relative $L$-smoothness of $f$ with respect to $h_{\mathrm{KL}}$~\citep[Proposition 1.1.]{lu2018relatively}\citep[Proposition 1]{bauschke2017descent}.
Furthermore, 
since $(\max_{j} d_{ij})^\rho \leq \sum_{j = 1}^m d_{ij}^\rho = (d_i(\mathbf{1}_m))^\rho = 1$, we have  
$(\max_j d_{ij})^\sigma \leq 1$.
Combining this uniform bound with~\cref{eq:tight-smoothness-KL}, we completes the proof.
\end{proof}

\begin{repeatlemma}{\ref{lem:descent-inequality}}[Descent property]
    Let $\{ \bfp^k \}_{k \geq 0}$ be a sequence of iterates generated by~\eqref{alg:mul-ttm} with any positive stepsizes $\{ \eta^k \}_{k \geq 0}$ and any initial point $\bfp^0 \in \textnormal{relint}(\Delta_B)$.
    We have that 
    \begin{equation}
        f(\bfp^{k + 1}) \leq f(\bfp^k) - \frac{1}{\eta^k} \left( \KL(\bfp^{k + 1}, \bfp^k) \right) - \inp{\nu^k}{\bfp^{k+1} - \bfp^k} + L \KL(\bfp^{k + 1}, \bfp^k).
        \label{eq:descent-inequality}
    \end{equation}
\end{repeatlemma}
\begin{proof}
    The first order optimality then provides
    \begin{equation*}
        \mathbf{0}_m = \eta^k (\bfy^k - \nu^k) - \left( \nabla h_{\textnormal{KL}}(\bfp^{k + 1}) - \nabla h_{\textnormal{KL}}(\bfp^k) \right).
    \end{equation*}
    and then by the definition of Bregman divergence 
    \begin{align*}
        \eta^k \langle \bfy^k - \nu^k, \bfp^{k + 1} - \bfp^k \rangle 
        =& \langle \nabla h_{\textnormal{KL}}(\bfp^{k + 1}) - \nabla h_{\textnormal{KL}}(\bfp^k), \bfp^{k + 1} - \bfp^k \rangle \\
        =& \KL(\bfp^{k + 1}, \bfp^k) + \KL(\bfp^k, \bfp^{k + 1}).
    \end{align*}

    \begin{align*}
        &f(\bfp^{k + 1}) \\ 
        \leq& f(\bfp^k) + \langle \nabla f(\bfp^k), \bfp^{k + 1} - \bfp^k \rangle + L \KL(\bfp^{k + 1}, \bfp^k) \\ 
        =& f(\bfp^k) - \langle \bfy^k, \bfp^{k + 1} - \bfp^k \rangle + L \KL(\bfp^{k + 1}, \bfp^k) \\ 
        =& f(\bfp^k) - \frac{1}{\eta^k} \left( \KL(\bfp^{k + 1}, \bfp^k) + \KL(\bfp^k, \bfp^{k + 1}) \right) - \inp{\nu^k}{\bfp^{k+1} - \bfp^k} + L \KL(\bfp^{k + 1}, \bfp^k) \\
        \leq& f(\bfp^k) - \frac{1}{\eta^k} \KL(\bfp^{k + 1}, \bfp^k) - \inp{\nu^k}{\bfp^{k+1} - \bfp^k} + L \KL(\bfp^{k + 1}, \bfp^k), 
    \end{align*}
    where we use the relative smoothness in the first inequality.
\end{proof}

\begin{repeatlemma}{\ref{lem:link-distances}}
    If $\eta^k \leq \frac{1}{2\beta}$ for all $k \geq 0$, then we have $- \inp{\nu^k}{\bfp^{k+1} - \bfp^k} \leq \beta(\eta^k)^2 \norm{\bfy^k}_{\bfp^k}^2$and $\KL(\bfp^{k + 1}, \bfp^k) \geq \frac{(\eta^k)^2}{3} \norm{\bfy^k}_{\bfp^k}^2$.
\end{repeatlemma}
\begin{proof}
    Since $\eta^k \leq \frac{1}{2\beta}$ thus $| \eta^k y^k_j | \leq \frac{1}{2}$, 
    \begin{align*}
        -\inp{\nu^k}{\bfp^{k+1} - \bfp^k} &= \sum_{j = 1}^m \left(\frac{\log{(1 + \eta^k y^k_j)}}{\eta^k} - y^k_j\right) \eta^k p^k_j y^k_j \\ 
        &\leq \sum_{j = 1}^m \left| \frac{\log{(1 + \eta^k y^k_j)}}{\eta^k} - y^k_j\right| \eta^k p^k_j | y^k_j | \\ 
        &= \sum_{j = 1}^m \left| \log{(1 + \eta^k y^k_j)} - \eta^k y^k_j\right| p^k_j | y^k_j | \\ 
        &\leq \sum_{j = 1}^m (\eta^k y^k_j)^2 p^k_j \beta = \beta(\eta
        ^k)^2\sum_{j=1}^m p^k_j (y^k_j)^2,
    \end{align*}
    where the last inequality holds because $|\log(1 + x) - x| \leq x^2$ for all $|x| \leq \frac{1}{2}$, together with $| \eta^k y^k_j | \leq \frac{1}{2}$ and $| y^k_j | \leq \beta$.


    Also, since $(1+x)\log(1+x)-x \geq \frac{x^2}{3}$ for all $|x| \leq \frac{1}{2}$, we have 
    \begin{align*}
        \KL(\bfp^{k+1}, \bfp^k)
        &= \sum_{j=1}^m p_j^k(1+\eta^k y^k_j)\log(1+\eta^k y^k_j) \\
        &= \eta^k \sum_{j=1}^m p_j^k y_j^k
        + \sum_{j=1}^m p_j^k \big( (1 + \eta^k y^k_j) \log(1 + \eta^k y^k_j) - \eta^k y^k_j \big) \\
        &\geq \frac{1}{3}\sum_{j=1}^m p_j^k \eta^k (y^k_j)^2 = \frac{(\eta^k)^2}{3}\sum_{j=1}^m p_j^k (y_j^k)^2.
        \qedhere
    \end{align*}
\end{proof}

\begin{repeattheorem}{\ref{thm:convergence-rate}}[Convergence rate guarantee]
    Let $\{ \bfp^k \}_{k \geq 0}$ be a sequence of iterates generated by~\eqref{alg:mul-ttm} with any initial point $\bfp^0 \in \textnormal{relint}(\Delta_B)$ and any positive stepsizes $\{ \eta^k \}_{k \geq 0}$ satisfying 
    \begin{equation*}
        \eta^k \equiv \eta \leq \frac{1}{12(L + \beta)}, 
    \end{equation*}
    then we can find an iterate $(\bfp, \bfy)$ satisfies $\norm{\bfy}_{\bfp} \leq \varepsilon$ in $\lceil
    {12(f(\bfp^0) - \underline{f})}/{\eta \varepsilon^2} \rceil$ iterations.
    Here, $L = (\sigma-1)\,m^{\sigma-1}(\min_{i,j} d_{ij})^{-\sigma}$ and $\beta = 1 + (\min_{i,j} d_{ij})^{-1}$.
\end{repeattheorem}

\begin{proof}[Proof of~\cref{thm:convergence-rate}]

    Since $\eta^k = \eta \leq \frac{1}{12(L + \beta)} \leq \min\{ \frac{1}{2L}, \frac{1}{12\beta} \}$, we have 
    \begin{align*}
        f(\bfp^{k + 1}) 
        \leq& f(\bfp^k) - \left( \frac{1}{\eta^k} - L \right)\KL(\bfp^{k+1}, \bfp^k) + \beta (\eta^k)^2 \sum_{j = 1}^m p^k_j (y^k_j)^2 \\ 
        \leq& f(\bfp^k) - \frac{1}{2\eta^k} \KL(\bfp^{k+1}, \bfp^k) + \frac{\eta^k}{12} \sum_{j = 1}^m p^k_j (y^k_j)^2 \\ 
        \leq& f(\bfp^k) - \frac{\eta^k}{6} \sum_{j = 1}^m p^k_j (y^k_j)^2 + \frac{\eta^k}{12} \sum_{j = 1}^m p^k_j (y^k_j)^2 \\ 
        \leq& f(\bfp^k) - \frac{\eta^k}{12} \sum_{j = 1}^m p^k_j (y^k_j)^2.
    \end{align*}
    Equivalently, 
    \begin{equation*}
        T \cdot \left( \min_{0 \leq k \leq T - 1} \sum_{j = 1}^m p^k_j (y^k_j)^2 \right) \leq \sum_{k = 0}^{T - 1} \sum_{j = 1}^m p^k_j (y^k_j)^2 
        \leq 
        \frac{12(f(\bfp^0) - f(\bfp^{T}))}{\eta} \leq \frac{12(f(\bfp^0) - \underline{f})}{\eta}, 
    \end{equation*}
    where $\bfp^0$ is the initial price vector and $\underline{f}$ is the minimum value of $f$ over $\Delta_B$.
    To obtain that $\min_{0 \leq k \leq T - 1} \sum_{j = 1}^m p^k_j (y^k_j)^2 \leq \varepsilon^2$, we need $T \geq \lceil
    \frac{12(f(\bfp^0) - \underline{f})}{\eta \varepsilon^2} \rceil$ iterations.
\end{proof}

\subsection{Details on the rate of relative t\^atonnement}
\label{app:subsec:comparison-relative-tatonnement}

Here, we focus on the case where each convex CES disutility functions are with $\rho > 2$.
Note that we use a different parameterization ($d_i(\bfx_i) = (\sum_{j = 1}^m (d_{ij} x_{ij})^\rho)^{1/\rho}$) for the convex CES disutility functions from that in~\cite{chaudhury2025t} ($d_i(\bfx_i) = (\sum_{j = 1}^m d_{ij} x_{ij}^\rho)^{1/\rho}$). 

We consider the terms with $1/\ell_0$ and lower bound all other nonnegative terms in the expression of the smoothness modulus.
In their Lemma 15, $r_i$ can equal $1$ in our setting (by~\cref{assump:wlog}), then we have 
\begin{align*}
    L \geq \sum_{i=1}^n B_i R_i L_i 
    &\geq \sum_{i=1}^n B_i R_i ((\sigma - 1)\frac{(\max_j d_{ij}^\rho)^{\frac{(\sigma - 1)^2}{\sigma}}}{(\min_j d_{ij}^\rho)^{\sigma - 1}}) \frac{1}{2\ell_0} \\ 
    &= \sum_{i=1}^n B_i (\min_j d_{ij}^\rho)^{-1/\rho} ((\sigma - 1)\frac{(\max_j d_{ij}^\rho)^{\frac{(\sigma - 1)^2}{\sigma}}}{(\min_j d_{ij}^\rho)^{\sigma - 1}}) \frac{1}{2\ell_0}.
\end{align*}

Since we have $d_i(\mathbf{1}_m) = 1$ by~\cref{assump:wlog}, we have 
\begin{equation*}
    \max_{j} d_{ij}^\rho \geq \frac{1}{m}\sum_{j = 1}^m d_{ij}^\rho = \frac{1}{m}.
\end{equation*}
Therefore, 
\begin{equation*}
    L \geq \sum_{i=1}^n B_i (\min_j d_{ij})^{-1} ((\sigma - 1)\frac{1}{m^{\frac{(\sigma - 1)^2}{\sigma}}(\min_j d_{ij}^\rho)^{\sigma - 1}}) \frac{1}{2\ell_0}.
\end{equation*}
Since $\rho (\sigma - 1) = \rho (\frac{\rho}{\rho - 1} - 1) = \rho \frac{1}{\rho - 1} = \sigma$, and $\frac{(\sigma - 1)^2}{\sigma} = \frac{1}{\rho(\rho - 1)} \leq \frac{1}{2}$, we have 
\begin{equation*}
    L \geq (\sigma - 1) m^{-\frac{1}{2}} (\min_j d_{ij})^{-(\sigma + 1)} \frac{1}{2\ell_0}.
\end{equation*}
Since we can take $\eta = \frac{1}{L}$, and their convergence guarantee leads to $\lceil 2L(f(\bfp^0) - \underline{f}) / \varepsilon^2 \rceil$ iterations required to compute an $\varepsilon$-approximate CE.


The following lemma shows that $f(\bfp^0) - \underline{f}$ is always bounded by a number that is polynomial in problem parameters.
\begin{proposition}[Boundedness of $f$]
    For convex CES disutility function with $\rho > 1$, i.e., $d_i(\bfx_i) = (\sum_{j = 1}^m (d_{ij} x_{ij})^\rho)^{1/\rho}$, 
    $ \log{\frac{1}{e \max_{i} \norm{\bfd_i}_\rho}} \leq f(\bfp) \leq  \log{\frac{1}{e \min_{i, j} d_{ij}}}$ for any $\bfp \in \Delta_B$.
    \label{prop:lipschitz}
\end{proposition}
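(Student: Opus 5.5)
The plan is to reduce $f$ on $\Delta_B$ to a weighted average of logarithms of dual norms and then bound each norm from above and below. The only structure I need is $\sum_j p_j = \norm{\bfB}_1 = 1$ and $\sum_i B_i = 1$, both from \cref{assump:wlog}.

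\textbf{Step 1: rewrite $f$.} For $\bfp \in \Delta_B$ the linear term equals $-1$. Write $\bfp \oslash \bfd_i$ for the vector $(p_j/d_{ij})_j$. For convex CES the inner maximization is
\[
\max_{\bfx_i \ge 0,\ d_i(\bfx_i)\le 1} \inp{\bfp}{\bfx_i} = \Big(\sum\nolimits_j d_{ij}^{-\sigma}p_j^\sigma\Big)^{1/\sigma} = \norm{\bfp \oslash \bfd_i}_\sigma .
\]
This follows from Hölder duality between $\ell_\rho$ and $\ell_\sigma$ after the substitution $z_j = d_{ij}x_{ij}$, and it matches the closed form of $f$ already stated in \cref{sec:convergence-rate}. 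Hence
\[
f(\bfp) = -1 + \sum_{i} B_i \log \norm{\bfp \oslash \bfd_i}_\sigma .
\]
Since $\log\frac{1}{e\,c} = -1 + \log\frac1c$ and the $B_i$ are convex weights, it suffices to show, for every $i$,
\[
\frac{1}{\norm{\bfd_i}_\rho} \;\le\; \norm{\bfp \oslash \bfd_i}_\sigma \;\le\; \frac{1}{\min_j d_{ij}} .
\]
Averaging against the $B_i$ then gives the two-sided bound, using $\norm{\bfd_i}_\rho \le \max_{i'} \norm{\bfd_{i'}}_\rho$ and $\min_j d_{ij} \ge \min_{i',j} d_{i'j}$.

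\textbf{Step 2: upper bound.} Bound $\norm{\bfp \oslash \bfd_i}_\sigma \le (\min_j d_{ij})^{-1}\norm{\bfp}_\sigma$. Since $\sigma > 1$ and $\bfp \ge 0$, we have $\norm{\bfp}_\sigma \le \norm{\bfp}_1 = 1$.

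\textbf{Step 3: lower bound.} Use the variational form in Step 1 with the feasible test point $\bfx_i = \mathbf{1}_m / \norm{\bfd_i}_\rho$. It satisfies $d_i(\bfx_i) = 1$, so it is feasible, and it yields the value $\sum_j p_j / \norm{\bfd_i}_\rho = 1/\norm{\bfd_i}_\rho$. Equivalently, apply Hölder's inequality directly: $\sum_j p_j = \sum_j (p_j/d_{ij})\, d_{ij} \le \norm{\bfp\oslash\bfd_i}_\sigma \norm{\bfd_i}_\rho$.

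I expect no real obstacle. The only point needing care is verifying the dual-norm identity for the inner maximum, which is exactly the standard $\ell_\rho$/$\ell_\sigma$ duality with $1/\rho + 1/\sigma = 1$. Note also that \cref{assump:wlog} gives $\norm{\bfd_i}_\rho = 1$, so the lower bound in fact reads $-1$. Stating it with $\norm{\bfd_i}_\rho$ simply keeps the proposition valid without that normalization.
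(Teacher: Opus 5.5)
Your proof is correct and follows essentially the paper's route. Both arguments write $f$ on $\Delta_B$ as $-1+\sum_i \frac{B_i}{\sigma}\log\sum_j d_{ij}^{-\sigma}p_j^{\sigma}$ and bound the inner sum between $(\lVert\mathbf{p}\rVert_1/\lVert\mathbf{d}_i\rVert_\rho)^\sigma$ and $(\lVert\mathbf{p}\rVert_1/\min_j d_{ij})^\sigma$; the paper gets these by explicitly minimizing and maximizing over the scaled simplex, you by H\"older's inequality and the entrywise bound with $\lVert\mathbf{p}\rVert_\sigma\le\lVert\mathbf{p}\rVert_1$ (and your remark that the normalization $d_i(\mathbf{1}_m)=1$ makes the lower bound exactly $-1$ is also right).
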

\begin{proof}
    By solving $\min_{\bfp \geq 0, \norm{\bfp}_1 = C} \sum_{j = 1}^m d_{ij}^{-\sigma} p_j^\sigma$, we have 
    \begin{equation}
        \sum_{j = 1}^m d_{ij}^{-\sigma} p_j^\sigma \geq \left( \frac{\norm{\bfp}_1}{\norm{\bfd_i}_\rho} \right)^\sigma, 
        \label{eq:min-inner-log}
    \end{equation}
    therefore $f(\bfp)$ is differentiable when $\bfp$ is away from $\mathbf{0}_m$. 
    By~\cref{eq:min-inner-log}, we have 
    \begin{equation*}
        f(\bfp) \geq - \sum_{j = 1}^m p_j + \sum_{i = 1}^n \frac{B_i}{\sigma} \log{\left(\left( \frac{\norm{\bfp}_1}{\norm{\bfd_i}_\rho} \right)^\sigma\right)} \geq \norm{\bfB}_1 \log{\frac{\norm{\bfB}_1}{e \max_{i} \norm{\bfd_i}_\rho}}.
    \end{equation*}
    Similarly, by solving $\max_{\bfp \geq 0, \norm{\bfp}_1 = C} \sum_{j = 1}^m d_{ij}^{-\sigma} p_j^\sigma$, we have 
    \begin{equation}
        \sum_{j = 1}^m d_{ij}^{-\sigma} p_j^\sigma \leq \left( \frac{\norm{\bfp}_1}{\min_{j} d_{ij}} \right)^\sigma, 
        \label{eq:max-inner-log}
    \end{equation}
    thus 
    \begin{equation*}
        f(\bfp) \leq - \sum_{j = 1}^m p_j + \sum_{i = 1}^n \frac{B_i}{\sigma} \log{\left(\left( \frac{\norm{\bfp}_1}{\min_{j} d_{ij}} \right)^\sigma\right)} \leq \norm{\bfB}_1 \log{\frac{\norm{\bfB}_1}{e \min_{i, j} d_{ij}}}.
        \qedhere
    \end{equation*}
\end{proof}

\newpage

\section{Additional numerical experiments}

\begin{figure}[h]
    \centering
    \includegraphics[width=1\linewidth]{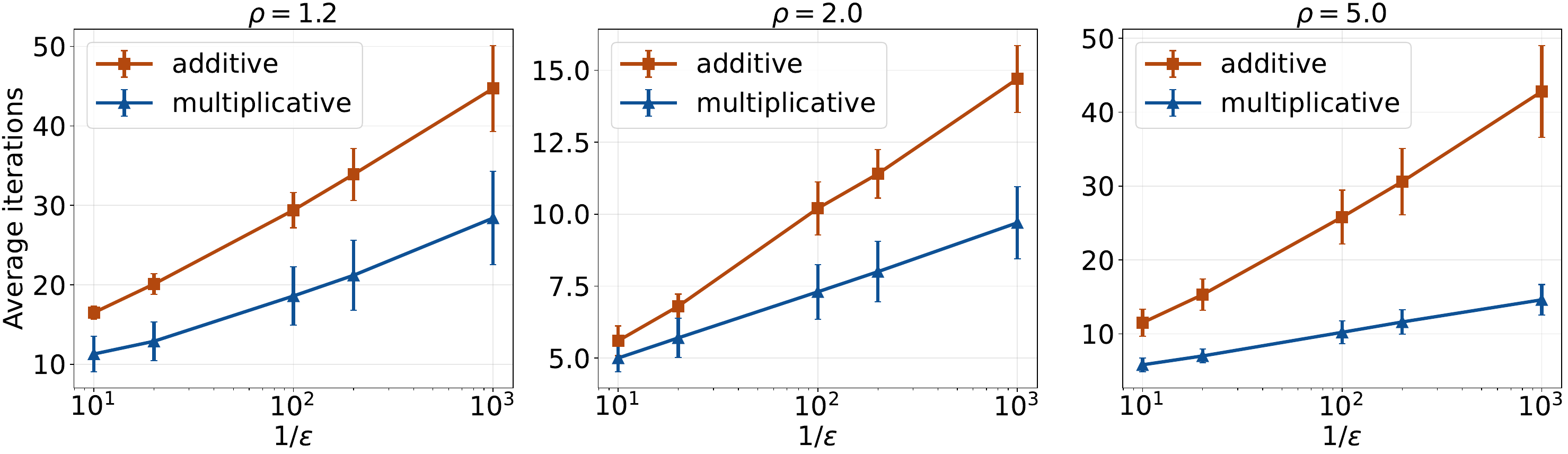}
    
    \caption{Multiplicative T\^atonnement v.s. Relative T\^atonnement on AAMAS bidding instances.
    The error bar corresponds to $1$ standard deviation.}
    \label{fig:bidding}
\end{figure}

\begin{figure}[h]
    \centering
    \includegraphics[width=1\linewidth]{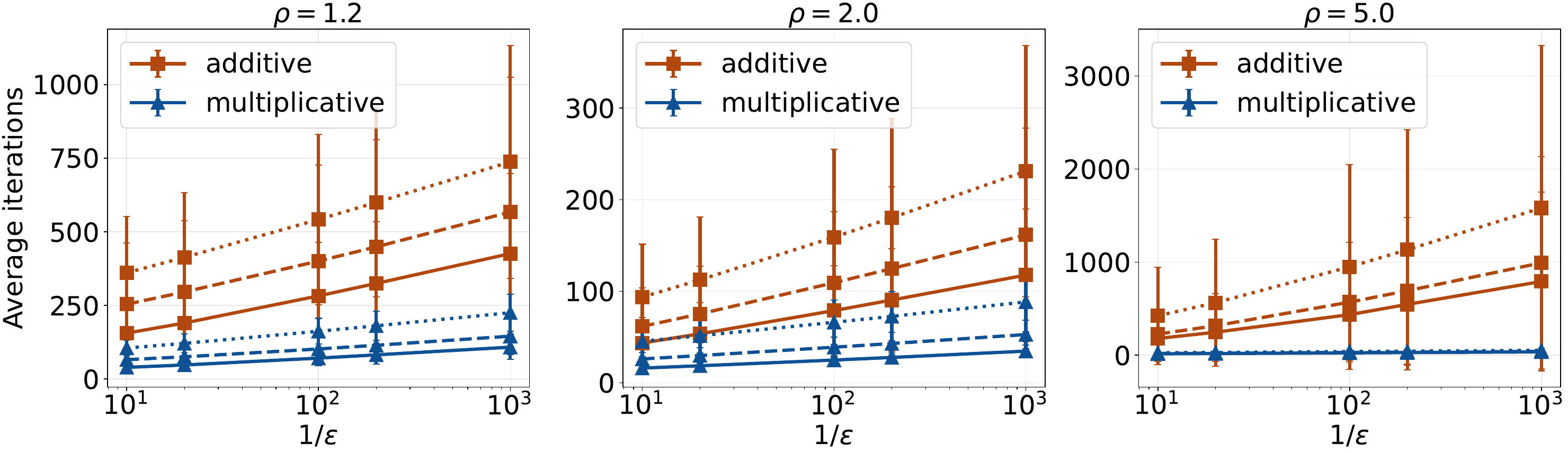}
    \caption{Multiplicative T\^atonnement v.s. Relative T\^atonnement on synthetic instances with lognormal value distributions.
    Solid lines correspond to $100 \times 500$ instances, dashed lines correspond to $100 \times 1000$ instances, dotted lines correspond to $100 \times 2000$ instances.
    The error bar corresponds to $0.1$ standard deviation.}
    \label{fig:synthetic-lognormal}
\end{figure}

\end{document}